\documentclass[11pt]{amsart}

\usepackage{float,graphicx,equation,fullpage,amssymb,amsbsy,xspace}
\usepackage{adjustbox,blindtext}
\usepackage{psfrag}
\usepackage{xcolor}
\usepackage{amsmath}
\usepackage{epsfig}
	
%\usepackage{draftwatermark}
% \SetWatermarkText{Confidential}
%\SetWatermarkScale{5}

\newcommand{\blot}[1]{}
\renewcommand{\d}{\mathrm{d}}
\newcounter{rem}

\newtheorem{theo}{Theorem}

\newtheorem{prop}[theo]{Proposition}

\newtheorem{remark}[rem]{Remark}

\newenvironment{Proof}{\removelastskip \vskip12pt plus 1pt \noindent
{\em Proof.\/}\rm }{\hfill$\Box$ \vskip12pt plus 1pt}

\title{Pendent Steady Rivulets: From Lubrication to Bifurcation}
\author{Michael Grinfeld and David Pritchard}

\address{Department of Mathematics and Statistics,
  University of Strathclyde, 26 Richmond Street, Glasgow G1 1XH, UK}

\email{david.pritchard@strath.ac.uk}

\parskip 0.1in
\parindent 0.0in

\begin{document}

\bibliographystyle{plain}

\begin{abstract}
We consider the shape of the free surface of steady pendent rivulets beneath a planar substrate. We formulate the governing equations in terms of two closely related dynamical systems and use classical phase-plane techniques to develop the bifurcation structure of the problem. Our results explain why lubrication theory is unable to capture this bifurcation structure for pendent rivulets, although it is successful in the related problem of sessile rivulets.
\end{abstract}

\maketitle

\section{Introduction\label{sec:intro}}

Rivulet flow on or beneath an inclined substrate has been the subject of many mathematical modelling studies, motivated by industrial and scientific applications, since the seminal work of Duffy \& Moffat \cite{DuffyMoffatt1995}. The physics of such flows is well understood, and implicit exact solutions to the full nonlinear problem have been presented (e.g. \cite{PerazzoGratton2004,Tanasijczuketal2010,Sokurov2020}). There is, however, a problem: in the case of a pendent, perfectly wetting rivulet, the classical lubrication approach (e.g. \cite{DuffyMoffatt1995,WilsonDuffy2005}), which is frequently used in parametric studies and extensions of the basic rivulet problem (e.g. \cite{Sullivanetal2008,Patersonetal2013,Patersonetal2014}), yields non-unique solutions which are defined only for particular rivulet widths.

The main purpose of the present study is to elucidate the relationship between the lubrication limit and the bifurcation structure of the full nonlinear boundary-value problem that describes the shape of a rivulet. This bifurcation structure has not, to the best of our knowledge, previously been investigated. The approach that we take also permits more analytical progress than previous treatments of the problem.

We confine ourselves to investigating the cross-sectional profile of uniform steady rivulets. The flux of fluid along the rivulet can therefore be ignored, and our results may also be interpreted as the profiles of static two-dimensional drops (fluid ridges) attached to a horizontal substrate. Two physically important quantities that characterise such a profile are its width and its total area; since solutions are symmetrical about the centreline, it will be convenient to consider the half-width and the half-area instead. 

We discuss the governing equations briefly, including the lubrication limit, in \S\ref{sec:goveqs}. In \S\ref{sec:phaseplanes} we formulate these equations in terms of two related phase planes, and describe the construction of solutions for perfectly wetting pendent rivulets in terms of orbits in these phase planes. In \S\ref{sec:bifurcation} we carry out a thorough bifurcation analysis of these solutions. In \S\ref{sec:imperfect} we discuss extensions to imperfectly wetting rivulets.

% \clearpage
\section{Governing equations\label{sec:goveqs}}

The key physical assumption is the boundary condition at a liquid--air interface
\begin{equation}\label{eq:pgammaC}
 \hat P = \hat\gamma\hat C,
\end{equation}
where $\hat P$ is the excess pressure in the liquid relative to that in the air, $\hat\gamma$ is the coefficient of surface tension, and $\hat C$ is the (signed) curvature of the interface, defined such that if the liquid is on the same side as the centre of curvature then $\hat C>0$. (See e.g. \cite{Batchelor1967}, pp. 61--64.)

If we can write the free surface locally as $\hat z = \hat\zeta(\hat y)$ for some function $\hat\zeta(\hat y)$ then we may evaluate $\hat C$ as
\begin{equation}\label{eq:C}
 \hat C = \dfrac{\pm\hat\zeta''}{\left[1+(\hat\zeta')^2\right]^{3/2}},
\end{equation}
where the positive sign is taken if the liquid occupies $\hat z>\hat\zeta$, and the negative sign is taken if the liquid occupies $\hat z<\hat\zeta$.

\begin{remark}
This sign change is omitted from some presentations, leading to possible confusion. For example, the right-hand part of figure 3 in \cite{Tanasijczuketal2010} is not physically possible in the orientation shown.
\end{remark}

We will be concerned with points $\hat y = \hat y_0$  on the free surface such that $|\hat\zeta'|\to\infty$ as $\hat y\to\hat y_0$. At such points the pressure, and thus from \eqref{eq:pgammaC} the signed curvature $\hat C$, must be continuous. Below, we will call such points `critical points'.

Suppose that for $\hat y<\hat y_0$ there are two interfaces $\hat z=\hat\zeta_1(\hat y)$ and $\hat z=\hat\zeta_2(\hat y)>\hat\zeta_1(\hat y)$, with liquid occupying $\hat z<\hat\zeta_1$ and $\hat z>\hat\zeta_2$, and suppose that these interfaces meet at $\hat y=\hat y_0$ (so $\hat y_0$ marks the point at which the free surface curves back and starts to overhang itself). From the discussion above, the appropriate conditions at the critical point $\hat y_0$ are thus
\begin{equation}\label{eq:critBCs}
\begin{cases}
 \hat\zeta_1(\hat y_0) = \hat \zeta_2(\hat y_0),\\
 \displaystyle\lim_{\hat y\to\hat y_0} \dfrac{-\hat\zeta_1''}{\left[1+(\hat\zeta_1')^2\right]^{3/2}} = \displaystyle\lim_{\hat y\to\hat y_0} \dfrac{+\hat\zeta_2''}{\left[1+(\hat\zeta_2')^2\right]^{3/2}},
\end{cases}
\end{equation}
where the appropriate one-sided limit is assumed. It is simple to see that conditions \eqref{eq:critBCs} also apply if the liquid occupies $\hat\zeta_1<\hat z<\hat\zeta_2$, or if the interfaces exist for $\hat y>\hat y_0$.

We consider uniform steady pendent rivulets below a planar substrate which is inclined at an angle $\alpha$ to the horizontal; $\hat z$ is the coordinate perpendicular to the plane and increasing \emph{downward}, and $\hat y$ is a horizontal coordinate parallel to the plane. Within such a rivulet the pressure is hydrostatic and
\begin{equation}\label{eq:hydrostatic}
 \hat P(\hat y,\hat z) = \hat\gamma\hat C + \hat\rho\hat g\cos(\alpha)(\hat z-\hat\zeta(\hat y)),
\end{equation}
where $\hat\rho$ is the density of the liquid and $\hat g$ is the acceleration due to gravity, and where we evaluate $\hat C$ on the interface $\hat z=\hat\zeta$. (Note that if the point $(\hat y,\hat z)$ in the fluid lies between two interfaces, $\hat\zeta_1<\hat z<\hat\zeta_2$, then we may take either $\hat\zeta=\hat\zeta_1$ or $\hat\zeta=\hat\zeta_2$ in \eqref{eq:hydrostatic}.)

The hydrostatic condition $\partial\hat P/\partial\hat y = 0$ therefore becomes
\begin{equation}
 \hat\gamma\dfrac{\d\hat C}{\d\hat y} - \hat\rho\hat g\cos(\alpha)\dfrac{\d\hat\zeta}{\d\hat y} = 0,
\end{equation}
or on any given portion of the free surface,
\begin{equation}\label{eq:zetaode-dimensional}
 \pm\hat k\hat\zeta' = \left[\dfrac{\hat\zeta''}{\left[1+(\hat\zeta')^2\right]^{3/2}} \right]',
\end{equation}
where the positive sign is taken for sections of the free surface on which liquid lies below air, and the negative sign for sections on which liquid lies above air, and where the parameter $\hat k$ is given by
\begin{equation}
 \hat k = \frac{\hat\rho\hat g}{\hat \gamma} \cos(\alpha) > 0.
\end{equation}

Equation \eqref{eq:zetaode-dimensional} is to be solved subject to suitable boundary conditions at the contact lines $\hat y=\pm\hat a$,
\begin{equation}
 \hat\zeta(\pm\hat a) = 0 \qquad \text{and} \qquad \hat\zeta'(\pm\hat a) = \mp\tan(\beta),
\end{equation}
where $\beta\geq 0$ is the contact angle and $\hat a$ is the half-width of the rivulet.

We may scale out $\hat k$ by defining
\begin{equation}
 y = \sqrt{\hat k}\hat y, \qquad \zeta(y) = \sqrt{\hat k}\hat\zeta(\hat y), \qquad a = \sqrt{\hat k}\hat a,
\end{equation}
to obtain the dimensionless system
\begin{equation}\label{eq:zetaode}
 k\zeta' = \left[\dfrac{\zeta''}{[1+(\zeta')^2]^{3/2}}\right]'
\end{equation}
subject to
\begin{equation}\label{eq:zetabc}
 \zeta(\pm a) = 0 \qquad \text{and} \qquad \zeta'(\pm a) = \mp\tan(\beta).
\end{equation}
The parameter $k$ takes only two values: $k=+1$ corresponds to a section of the interface with liquid below air, while $k=-1$ corresponds to a section of the interface with liquid above air.

Unless otherwise stated, we consider the case of a perfectly wetting fluid, for which the contact angle $\beta=0$ and \eqref{eq:zetabc} becomes
\begin{equation}\label{eq:zetabc2}
 \zeta(\pm a) = 0 \qquad \text{and} \qquad \zeta'(\pm a) = 0.
\end{equation}

\subsection{Lubrication limit\label{sec:lubrication}}

The lubrication limit considered by Duffy and co-workers (e.g. \cite{DuffyMoffatt1995,Sullivanetal2008,Patersonetal2013,Patersonetal2014,WilsonDuffy2005}) corresponds to $|\zeta'|\ll1$. The system \eqref{eq:zetaode} and \eqref{eq:zetabc2} thus becomes
\begin{equation}\label{eq:zetaode_lub}
 k \zeta'= \zeta''', \qquad \zeta(\pm a) = 0, \qquad \zeta'(\pm a) = 0.
\end{equation}
When $k=-1$ (a thin pendent rivulet), \eqref{eq:zetaode_lub} admits the family of solutions \cite{WilsonDuffy2005}
\begin{equation}\label{eq:zetasol_lub}
 \zeta(y) = \dfrac{1}{2}h_{\mathrm{m}}\left[(1 - (-1)^n\cos(y)\right]
\end{equation}
defined only for $a=n\pi$ ($n\in\mathbb{N}$) but for any value of $h_{\mathrm{m}}>0$. Such a solution represents $n$ identical adjacent rivulets, each of maximum thickness $h_{\mathrm{m}}$; the total half-area $A$ of the solution is given by $A = \frac{1}{2}h_{\mathrm{m}}n\pi$. Thus in the lubrication limit, for a given solution branch indexed by $n$ the half-width $a$ is independent of $A$ but the maximum thickness $h_{\mathrm{m}}$ may be chosen freely to satisfy a given choice of $A$. As we will see, these solutions fail to capture the behaviour of the system for rivulets of non-vanishing thickness.

When $0<\beta<\pi/2$, the lubrication solution satisfying \eqref{eq:zetabc} becomes
\begin{equation}\label{eq:zetasol_lub_pb}
 \zeta(y) = \dfrac{\tan(\beta)}{\sin(a)}\left(\cos(y)-\cos(a)\right),
\end{equation}
with a single free parameter $a$. We shall return to this in \S\ref{sec:imperfect}.

\begin{remark}
When $k=1$ (a thin sessile rivulet), \eqref{eq:zetaode_lub} has no non-zero solution. The absence of a solution in the lubrication limit does reflect the behaviour of solutions to the full (non-lubrication) system; we discuss this briefly in \S\ref{sec:discussion}.
\end{remark}

% \clearpage
\section{Phase planes and construction of solutions\label{sec:phaseplanes}}

\subsection{Phase planes\label{subsec:phaseplanes}}

We define
\begin{equation}
 p(y) = \zeta'(y) \qquad \text{and} \qquad q(y) = \dfrac{p'}{k(1+p^2)^{3/2}},
\end{equation}
so that \eqref{eq:zetaode} becomes the system
\begin{equation}\label{eq:pqodes}
  \begin{cases}
    p'= k(1+p^2)^{3/2}q,\\
    q'= p.
  \end{cases}
\end{equation}
It will be helpful when interpreting the solutions to note that while $p(y)$ describes the local gradient of the free surface, $q(y)$ represents both the pressure within the rivulet and (to within an additive constant) the profile $\zeta(y)$.

The contact-line conditions (\ref{eq:zetabc2}b) become
\begin{equation}\label{eq:pbc}
 p(\pm a) =0.
\end{equation}
We will see later how to deal with the conditions $\zeta(\pm a)=0$ [equation (\ref{eq:zetabc2}a)].

At a critical point $y=y_0$, two sections of the free surface $z=\zeta_1$ and $z=\zeta_2$ meet; the values of $k$ on these two sections must be different. In the obvious notation, we must have
\begin{equation}\label{eq:critpq}
\begin{cases}
 p_1 \to \pm\infty \quad \text{and} \quad p_2 \to \mp\infty,\\
 q_1 \to q_{\infty} \quad \text{and} \quad q_2\to q_{\infty}
\end{cases}
\end{equation}
for some finite value $q_{\infty}$, as $y\to y_0$. (The condition on $q$ is obtained from \eqref{eq:critBCs}, noting the change in sign due to the different values of $k$ on the two sections of the free surface.)

It is straightforward to show that the system \eqref{eq:pqodes} has a conserved quantity
\begin{equation}\label{eq:Hdef}
 H(p,q;k) = 1-\dfrac{1}{(1+p^2)^{1/2}} -\dfrac{k}{2}q^2,
\end{equation}
to which we shall refer informally as the energy of a solution (although it does not represent a physical energy). Orbits in either the $(p,q)$ phase plane corresponding to $k=1$ or the $(p,q)$ phase plane corresponding to $k=-1$ therefore correspond to contours of $H(p,q;\pm 1)$; this will be crucial to the construction.

\begin{figure}[tbp]
\begin{center}
\setlength{\unitlength}{1.0cm}

 \begin{picture}(12,10.5)

 \put(1.0,0.0){\epsfig{figure={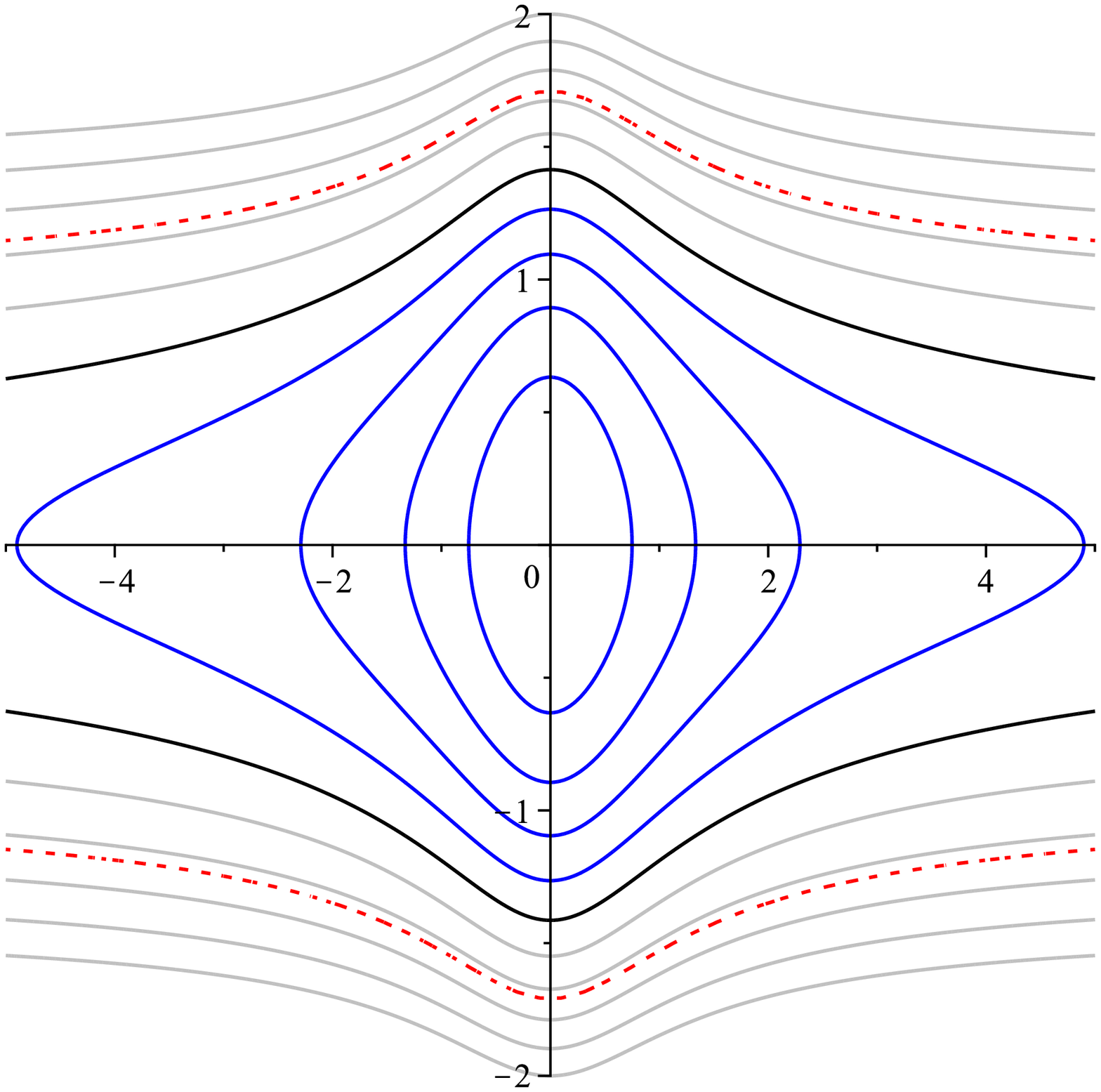},width=10\unitlength}}

 \put(11.0,4.6){\footnotesize{$p$}}
 \put(6.2,9.8){\footnotesize{$q$}}

 \put(10.9,6.35){\footnotesize{$E_-=1$}}
 \put(10.9,3.4){\footnotesize{$E_-=1$}}

 \put(9.0,3.5){\vector(3,1){0.5}}
 \put(3.0,3.5){\vector(3,-1){0.5}}
 \put(9.0,6.5){\vector(-3,1){0.5}}
 \put(3.0,6.5){\vector(-3,-1){0.5}}

 \end{picture}
 \caption{Phase portrait of \eqref{eq:pqodes} with $k=-1$; orbits correspond to $H(p,q,;-1)=E_-$. Closed orbits (blue) represent energies $0<E_-<1$; unbounded orbits (gray) represent energies $E_->1$; the black orbit represents $E_-=1$; the dashed red orbit represents $E_-=E^*\approx 1.462$ (see \S\ref{subsec:largeenergy}). Arrows indicate the direction of travel in each quadrant as $y$ increases.}
  \label{fig:phaseplane_k=-1}

\end{center}
\end{figure}

\begin{figure}[tbp]
\begin{center}
\setlength{\unitlength}{1.0cm}

 \begin{picture}(12,10.5)

 \put(1.0,0.0){\epsfig{figure={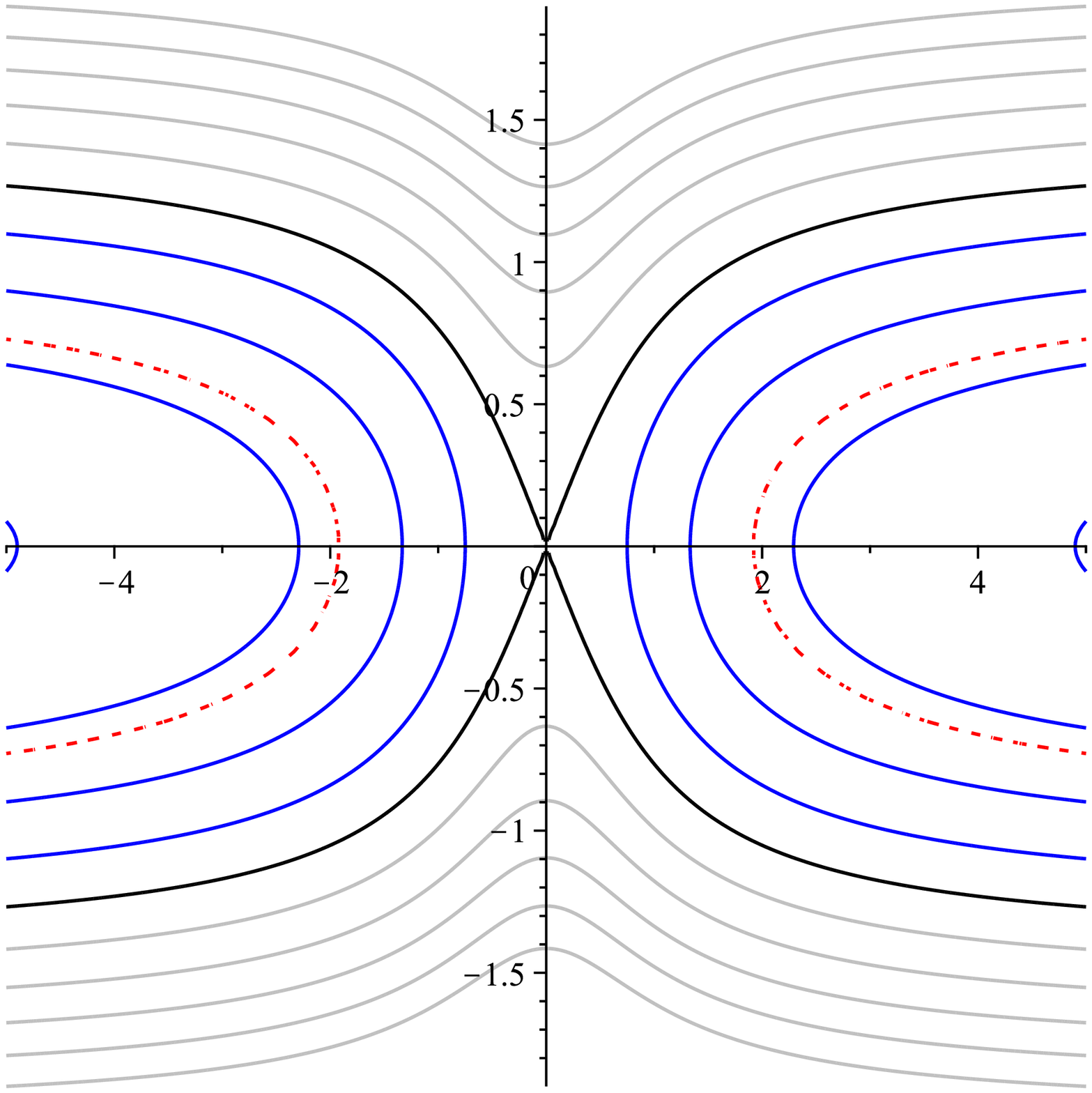},width=10\unitlength}}

 \put(11.0,4.6){\footnotesize{$p$}}
 \put(6.2,9.8){\footnotesize{$q$}}

 \put(10.9,8.15){\footnotesize{$E_+=0$}}
 \put(10.9,1.6){\footnotesize{$E_+=0$}}

 \put(9.4,3.2){\vector(-3,1){0.5}}
 \put(3.4,3.4){\vector(-3,-1){0.5}}
 \put(9.0,6.65){\vector(3,1){0.5}}
 \put(3.0,6.65){\vector(3,-1){0.5}}

 \end{picture}
 \caption{Phase portrait of \eqref{eq:pqodes} with $k=+1$; orbits correspond to $H(p,q,;+1)=E_+$. $p$-bounded orbits (blue) represent energies $0<E_+<1$; $q-$bounded orbits (gray) represent energies $E_+<0$; the separatrix (black) represents $E_+=0$; the dashed red orbit represents $E_+=2-E^*\approx 0.538$ (see \S\ref{subsec:largeenergy}). Arrows indicate the direction of travel in each quadrant as $y$ increases.}
  \label{fig:phaseplane_k=+1}

\end{center}
\end{figure}

Figure \ref{fig:phaseplane_k=-1} illustrates the phase plane for $k=-1$. By inspection of the equation $H(p,q;-1)=E_-$ it is clear that closed orbits (blue in the figure) correspond to $0<E_-<1$; such orbits intersect the $q$-axis at $q=\pm q_0(E_-)$ and intersect the $p$-axis at $p=\pm p_0(E_-)$, where
\begin{equation}
 q_0(E)=\sqrt{2E} \qquad \text{and} \qquad p_0(E) = \left[\dfrac{2E-E^2}{1-E}\right]^{1/2}.
\end{equation}
Similarly, unbounded orbits (gray in the figure) correspond to $E_->1$; these intersect the $q$-axis at $q=\pm q_0(E_-)$ and satisfy $|q|\to q^-_{\infty}(E_-)$ as $|p|\to\infty$, where
\begin{equation}
 q^-_{\infty}(E) = \sqrt{2(E-1)}.
\end{equation}
The critical orbit corresponding to $E_-=1$ is shown in black. We will discuss the significance of the orbit corresponding to $E_-=E^*$ (shown in dashed red) in \S\ref{subsec:largeenergy}.

Figure \ref{fig:phaseplane_k=+1} illustrates the phase plane for $k=+1$. By inspection of the equation $H(p,q;+1)=E_+$ it is clear that orbits that are bounded in $p$ (blue in the figure) correspond to $0<E_+<1$; such orbits intersect the $p$-axis at $p=\pm p_0(E_+)$. Similarly, orbits that are bounded in $q$ (gray in the figure) correspond to $E_+<0$; these intersect the $q$-axis at $q=\pm q_0(E_+)$. All orbits satisfy $|q|\to q^+_{\infty}(E_+)$ as $|p|\to\infty$, where
\begin{equation}
 q^+_{\infty}(E) = \sqrt{2(1-E)}.
\end{equation}
The separatrix corresponding to $E_+=0$ is shown in black. We will discuss the significance of the orbit corresponding to $E_+=2-E^*$ (shown in dashed red) in \S\ref{subsec:largeenergy}.

\subsection{Time maps\label{subsec:timemaps}}

We can rearrange the equation $H(p,q;k)=E$ to obtain
\begin{equation}\label{eq:dqdy}
 \dfrac{\d q}{\d y} = \pm \left[ \dfrac{1}{(1-E-\frac{1}{2}kq^2)^2} - 1 \right]^{1/2}.
\end{equation}
The `time' taken to traverse an orbit from $q=q_1$ to $q=q_2>q_1$ (in physical terms, the lateral distance between these two points) is therefore given by
\begin{equation}\label{eq:Lq1q2}
 L(q_1,q_2;E,k) = \int_{q_1}^{q_2}\left[ \dfrac{1}{(1-E-\frac{1}{2}kq^2)^2} - 1 \right]^{-1/2}\d q = \int_{q_1}^{q_2} \dfrac{2(1-E)-kq^2}{\left[(2E+kq^2)(4-2E-kq^2)\right]^{1/2}}\d q.
\end{equation}

\subsection{Construction of pendent solutions with small energy\label{subsec:smallenergy}}

We first consider the construction of solutions for pendent rivulets with small energy, for which the liquid lies above the air at every point on the free surface ($k=-1$). Such a solution corresponds to a closed orbit in the $k=-1$ phase plane, which we characterise by $H(p,q;-1) = E_-$, where $0<E_-<1$. As we have seen, the orbit intersects the $q$-axis at $q=\pm q_0(E_-)$ and the $p$-axis at $p=\pm p_0(E_-)$.

\begin{remark}\label{remark:restrictedsols}
Although we restrict ourselves here to solutions that traverse the orbit once, it is clear that we can construct solutions that traverse it any whole number of times, corresponding to $n\geq 2$ in the lubrication solution \eqref{eq:zetasol_lub}. We also restrict ourselves to physically meaningful solutions $\zeta\geq 0$; each such solution has a counterpart unphysical solution $\zeta\leq 0$, which is constructed by starting at $q=+q_0(E_-)$.
\end{remark}

The half-width of the rivulet is given from \eqref{eq:Lq1q2} by
\begin{align}\label{eq:alowE}
 a(E_-) & = L(-q_0(E_-),q_0(E_-);E_-,-1)
 = \int_{-\sqrt{2E_-}}^{\sqrt{2E_-}} \dfrac{2(1-E_-)+q^2}{\left[(2E_--q^2)(4-2E_-+q^2)\right]^{1/2}}\d q.
\end{align}

\begin{prop}\label{prop:alowEmonotonic}
The half-width $a(E_-)$ given by \eqref{eq:alowE} is a monotonically decreasing function of $E_-$ on $0<E_-\leq1$.
\end{prop}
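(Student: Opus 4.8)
The plan is to remove the $E_-$-dependence from the limits of integration by a rescaling, so that $E_-$ enters only through the integrand, and then to differentiate under the integral sign. Since the endpoints $q=\pm\sqrt{2E_-}$ in \eqref{eq:alowE} themselves depend on $E_-$, I would first substitute $q=\sqrt{2E_-}\,s$. A short computation, using $2E_--q^2 = 2E_-(1-s^2)$, $4-2E_-+q^2 = 4-2E_-(1-s^2)$ and $2(1-E_-)+q^2 = 2[1-E_-(1-s^2)]$, collapses the integral onto the fixed domain $[-1,1]$:
\begin{equation}
 a(E_-) = \int_{-1}^{1} \frac{2\,[1-E_-(1-s^2)]}{\sqrt{1-s^2}\,\sqrt{4-2E_-(1-s^2)}}\,\d s.
\end{equation}
The crucial bookkeeping point is that the factor $\sqrt{2E_-}$ coming from $\d q$ cancels exactly against the $\sqrt{2E_-}$ produced by $(2E_--q^2)^{1/2}$, so that no spurious singularity or $E_-\to0$ degeneracy is introduced; the rescaled integrand is smooth in $E_-$ and retains only an integrable $(1-s^2)^{-1/2}$ singularity at $s=\pm1$.

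Next I would differentiate under the integral sign. Although the integrand is singular at $s=\pm1$, its $E_-$-derivative is not: writing $w=1-s^2\in[0,1]$, a direct calculation gives
\begin{equation}
 \frac{\partial}{\partial E_-}\left[\frac{2(1-E_- w)}{\sqrt{w}\,\sqrt{4-2E_- w}}\right] = \frac{2\sqrt{w}\,(E_- w-3)}{(4-2E_- w)^{3/2}}.
\end{equation}
The $\sqrt{w}$ factor makes this continuous and bounded on $[-1,1]$ (indeed it vanishes at the endpoints), and the bound is uniform on compact subintervals of $0<E_-\le1$. This boundedness is exactly what is needed to justify interchanging differentiation and integration (e.g. by dominated convergence), so that $a'(E_-)$ equals the integral over $[-1,1]$ of the right-hand side above.

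The conclusion is then immediate from a sign check. On the relevant range we have $0<E_-\le1$ and $0\le w\le1$, so $E_- w\le1<3$ and hence $E_- w-3<0$, while $\sqrt{w}\ge0$ and $(4-2E_- w)^{3/2}\ge 2^{3/2}>0$. The integrand of $a'(E_-)$ is therefore non-positive everywhere and strictly negative except at the two endpoints $s=\pm1$, which gives $a'(E_-)<0$ throughout $0<E_-\le1$, as required.

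I expect the only genuine obstacle to be this front-loaded bookkeeping: choosing the substitution that fixes the limits, confirming the $\sqrt{2E_-}$ cancellation, and pinning down the sign of the $E_-$-derivative. Once the change of variables is in place the monotonicity follows from a one-line argument, so essentially all of the difficulty lies in setting up the rescaled integral correctly.
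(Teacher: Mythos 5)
Your proof is correct and follows essentially the same route as the paper: the paper uses the substitution $q^2=2E_-u$ (equivalent to your $q=\sqrt{2E_-}\,s$ via $u=s^2$, with your $w$ playing the role of the paper's $1-u$) to fix the limits of integration, then differentiates under the integral sign and observes that the factor $3-(1-u)E_->0$ forces the derivative to be negative on $0<E_-\leq 1$. The only difference is that you explicitly justify the interchange of differentiation and integration, which the paper takes for granted.
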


\begin{proof}
By the change of variables $q^2=2E_-u$ we can write
\begin{equation}
 a(E_-) = \sqrt{2}\int_0^1 \dfrac{1-(1-u)E_-}{[u(1-u)(2-(1-u)E_-)]^{1/2}}\d u.
\end{equation}
It follows that
\begin{equation}
 \dfrac{\d}{\d E_-}a(E_-) = \dfrac{1}{\sqrt{2}}\int_0^1 \dfrac{-(1-u)^{1/2}(3-(1-u)E_-)}{[u(2-(1-u)E_-)^3]^{1/2}}\d u.
\end{equation}
Recalling that $0<E_-\leq 1$, it is clear that the integrand is negative and thus $a(E_-)$ given by \eqref{eq:alowE} is monotonically decreasing as required.
\end{proof}

The maximum gradient occurs when $q=0$, and is given by $p_0(E_-)$; as $E_-\to 1^-$, $p_0(E_-)\to\infty$. The critical orbit $E_-=1$ thus corresponds to a rivulet profile the gradient of which becomes infinite at $y=\pm\frac{1}{2}a(1)$. We may evaluate the half-width of this critical solution as
\begin{equation}
 a(1) = \int_{-\sqrt{2}}^{\sqrt{2}} \dfrac{q^2}{\left[4-q^4\right]^{1/2}}\d q = 4E\left(\frac{1}{\sqrt{2}}\right) -2K\left(\frac{1}{\sqrt{2}}\right) \approx 1.694,
\end{equation}
where $K$ and $E$ are the complete elliptic integrals of the first and second kind.

Finally, we impose the boundary conditions $\zeta(\pm a) = 0$ by setting $\zeta(y)=q(y)+q_0(E_-)$. It follows that the maximum thickness of the rivulet is $h_{\mathrm{m}} = 2q_0(E_-) = 2\sqrt{2E_-}$. Further, the symmetry of the orbits in $q$ means that we can easily calculate the half-area
\begin{equation}\label{eq:arealowE}
 A(E_-) = \int_{-a(E_-)}^0 \zeta(y)\,\d y = \int_{-a(E_-)}^{0}\left(q(y)+q_0(E_-)\right)\,\d y = q_0(E_-)a(E_-).
\end{equation}

\begin{figure}[tbp]
\begin{center}
\setlength{\unitlength}{1.0cm}

 \begin{picture}(14,7)

 \put(0.0,0.0){\epsfig{figure={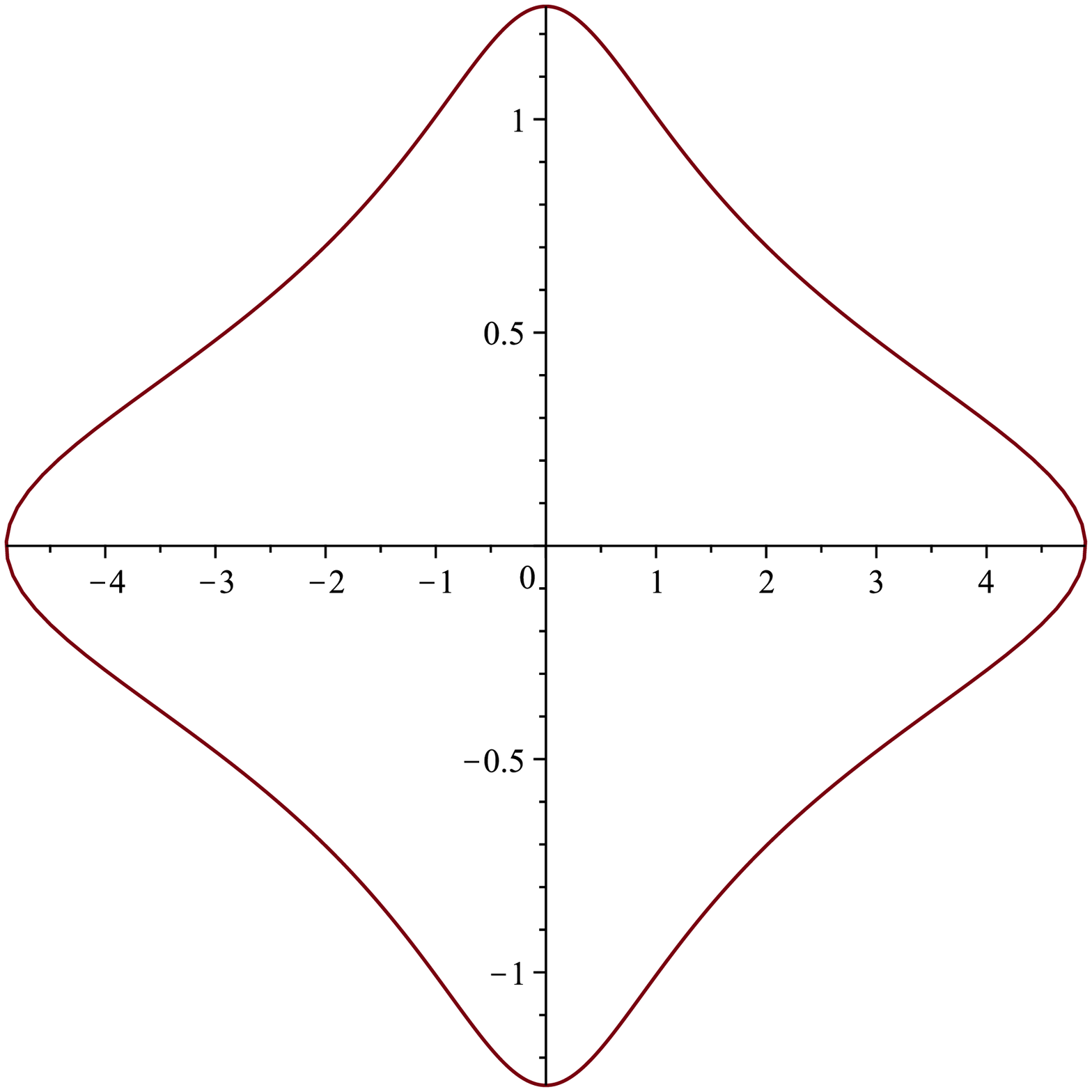},width=7\unitlength}}
 \put(7.0,0.0){\epsfig{figure={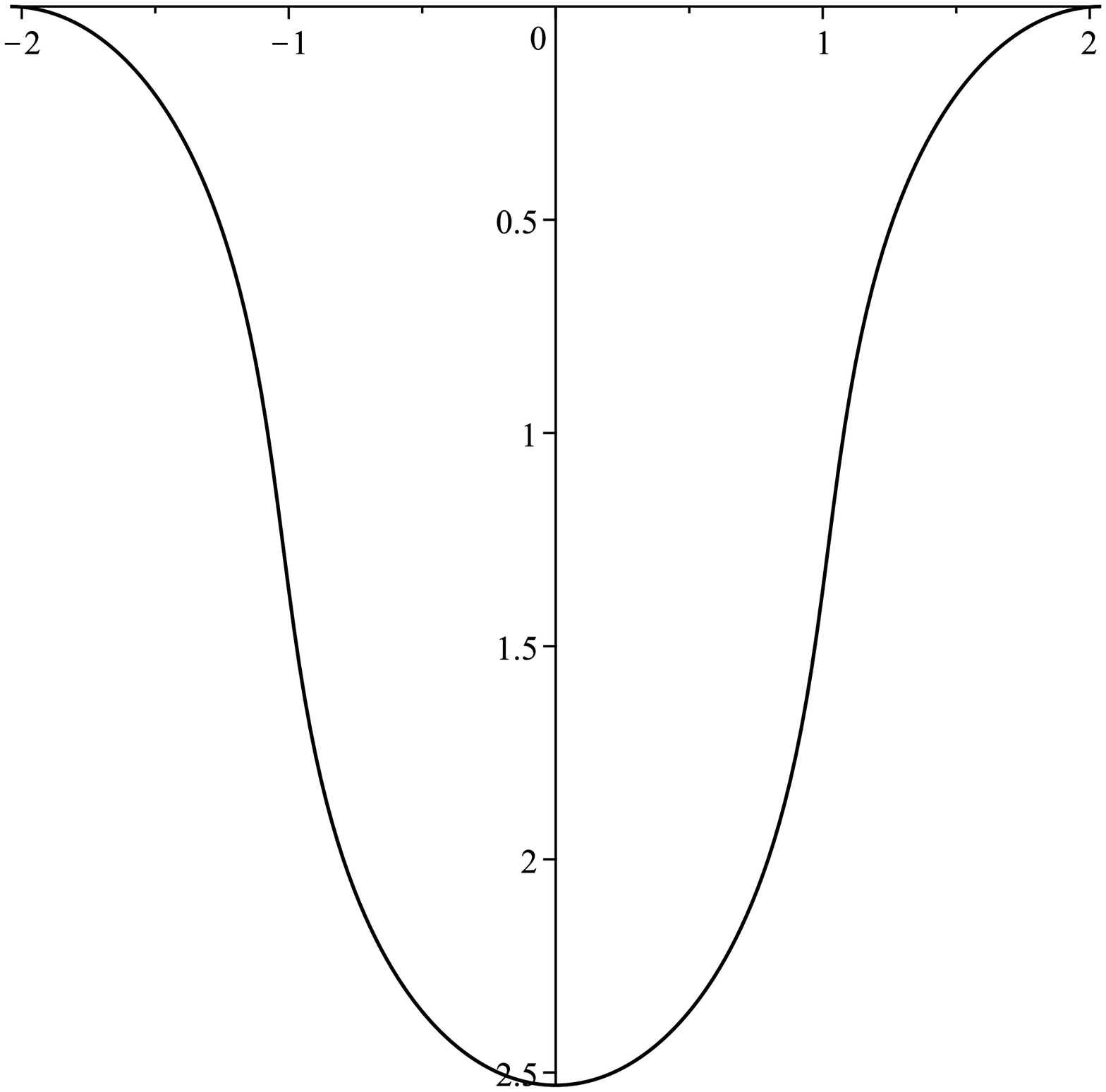},width=7\unitlength}}

 \put(0.0,6.5){\footnotesize{(a)}}
 \put(6.5,6.5){\footnotesize{(b)}}

 \put(5.0,3.0){\scriptsize{$p$}}
 \put(3.7,5.2){\scriptsize{$q$}}

 \put(10.75,3.3){\scriptsize{$z$}}
 \put(12.5,7.0){\scriptsize{$y$}}

 \put(5.2,1.3){\vector(1,1){0.5}}
 \put(1.3,1.8){\vector(1,-1){0.5}}
 \put(5.7,5.1){\vector(-1,1){0.5}}
 \put(1.8,5.6){\vector(-1,-1){0.5}}

 \end{picture}
 \caption{The solution corresponding to $E_-=0.8$. (a) The orbit in the $(p,q)$ phase plane; arrows indicate the direction of increasing $y$. (b) The rivulet profile $z=\zeta(y)$ (note that the $z$-coordinate increases downward). We have $q_0(E_-)\approx 1.265$, $a(E_-)\approx 2.042$, $A(E_-)\approx 2.584$.}
 \label{fig:E=0.8}

\end{center}
\end{figure}

Figure \ref{fig:E=0.8} illustrates a typical solution. The orbit in the phase plane is traversed anticlockwise from $(0,-q_0)$ and the profile is traversed from $y=-a$ to $y=+a$. Note that the symmetry of the orbit under reflection in the $p$- and $q$-axes corresponds to symmetry between the upper and lower parts, as well as the left- and right-hand parts, of the profile.

\subsection{Construction of pendent solutions with larger energy\label{subsec:largeenergy}}

Orbits in the $k=-1$ phase plane corresponding to energies $E_->1$ do not connect two points on the $q$-axis and therefore cannot satisfy the boundary-value problem \eqref{eq:pqodes}, \eqref{eq:pbc}. Instead, the argument of \S\ref{sec:goveqs} leading to \eqref{eq:C} shows that we must construct an `overhanging' rivulet profile by connecting sections of the free surface on which liquid lies above air ($k=-1$) with sections on which liquid lies below air ($k=+1$). These sections must be connected at critical points at which the free surface becomes vertical and the conditions \eqref{eq:critBCs} (and thus \eqref{eq:critpq}) apply.

We describe the construction of a rivulet with a single overhang between two critical points $y_1$ and $y_2$ with $y_1<y_2$; we will show (Remark \ref{rem:overhangs}) that rivulets with multiple overhangs are impossible. The section of the free surface between the contact line $y=-a$ and the first critical point $y=y_2$ is given by $z=\zeta_1(y)$; on this section, liquid lies above air, $k=-1$. The section between the critical points $y=y_2$ and $y=y_1$ is given by $z=\zeta_2(y)$; on this section, liquid lies below air, $k=+1$. The section between the critical point $y=y_1$ and the midline of the rivulet $y=0$ is given by $z=\zeta_3(y)$; on this section, liquid again lies above air, $k=-1$. (The reader may find it helpful to refer to figure \ref{fig:E=1.3}.) Once we have constructed the free surface between $y=-a$ and $y=0$, the other half follows by symmetry.

The construction proceeds as follows, for an energy $E_- > 1$. 
\begin{itemize}
 \item[(i)] To construct $\zeta_1(y)$ between the contact line $y=-a$ and the critical point $y=y_2$, we start at $(p,q)=(0,-q_0(E_-))$ in the $k=-1$ phase plane, and follow this orbit in the direction of increasing $y$ until $(p,q)\to (\infty,-q^-_{\infty}(E_-))$.
 \item[(ii)] By \eqref{eq:critpq}, this orbit now connects to the orbit in the $k=+1$ phase plane which satisfies $(p,q) \to (-\infty,-q^-_{\infty}(E_-))$. Setting $q^-_{\infty}(E_-)=q^+_{\infty}(E_+)$, we determine that the corresponding energy in the $k=+1$ phase plane is $E_+=2-E_-$.
 \item[(iii)] To construct $\zeta_2(y)$ between the critical points $y=y_2$ and $y=y_1$, we follow this orbit in the $k=+1$ phase plane in the direction of \emph{decreasing} $y$ from $(p,q)\to(-\infty,-q^+_{\infty}(E_+))$ to $(p,q) \to (-\infty,q^+_{\infty}(E_+))$.
 \item[(iv)] By \eqref{eq:critpq}, this orbit now connects to the orbit in the $k=-1$ phase plane which satisfies $(p,q) \to (\infty,q^-_{\infty}(E_-))$.
 \item[(v)] To construct $\zeta_3(y)$ between the critical point $y=y_1$ and the midline $y=0$, we follow this orbit in the $k=-1$ phase plane in the direction of increasing $y$ from $(p,q)\to (\infty,q^-_{\infty}(E_+))$ to $(p,q)=(0,q_0(E_-))$.
  \item[(vi)] The rest of the construction follows by symmetry.
\end{itemize}

To determine the critical points $y_1$ and $y_2$ and the half-width $a$ we again use the `time map' \eqref{eq:Lq1q2}. From the construction described above, we have
\begin{align}
 y_2-(-a) & = L_-(E_-) = L(-q_0(E_-),-q^-_{\infty}(E_-);E_-,-1),\\
 y_2-y_1 & = L_+(E_-) = L(-q^-_{\infty}(E_-),q^-_{\infty}(E_-);2-E_-,+1),\\
 0-y_1 & = L(q^-_{\infty}(E_-),q_0(E_-);E_-,-1) = L_-(E_-).
\end{align}
Thus
\begin{equation}\label{eq:y1y1highE}
 y_1 = -L_-(E_-), \qquad y_2 =-L_-(E_-)+L_+(E_-),
\end{equation}
and
\begin{equation}\label{eq:ahighE}
 a(E_-) = 2L_-(E_-) - L_+(E_-).
\end{equation}

\begin{prop}\label{prop:ahighEmonotonic}
The half-width $a(E_-)$ given by \eqref{eq:ahighE} is a monotonically decreasing function of $E_-$ on $1\leq E_-<2$.
\end{prop}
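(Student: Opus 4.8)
The plan is to show that the linear combination $a(E_-)=2L_-(E_-)-L_+(E_-)$ defined by \eqref{eq:ahighE} collapses to a single integral which, after the change of variables used previously, turns out to be \emph{exactly} the integral formula obtained in the proof of Proposition~\ref{prop:alowEmonotonic}; the monotonicity then follows from the same differentiation, with only the sign of the integrand requiring re-examination on the new range $1\le E_-<2$.

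First I would write $L_-(E_-)$ and $L_+(E_-)$ out explicitly using \eqref{eq:Lq1q2}. Using the energy relation $E_+=2-E_-$ together with $q^-_{\infty}(E_-)=q^+_{\infty}(2-E_-)$, a short computation shows that both time maps carry the \emph{same} denominator; write $D(q)=\left[(2E_--q^2)(4-2E_-+q^2)\right]^{1/2}$ and set $b=q^-_{\infty}(E_-)=\sqrt{2(E_--1)}$, $c=q_0(E_-)=\sqrt{2E_-}$. The integrand of $L_-$ is even in $q$, so folding the interval gives $L_-(E_-)=\int_b^c (q^2-b^2)/D(q)\,\d q$, while $L_+(E_-)=2\int_0^b (b^2-q^2)/D(q)\,\d q$. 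The key algebraic observation is that the numerator of $L_+$ is $-(q^2-b^2)$, so $-L_+$ supplies precisely the missing piece of the $L_-$ integral on $(0,b)$, and
\[
 a(E_-)=2L_-(E_-)-L_+(E_-)=2\int_0^{c}\frac{q^2-b^2}{D(q)}\,\d q.
\]
At the critical-point endpoint $q=b$ the numerator vanishes while $D(b)=2\ne 0$, so the integrand is continuous there and the splicing introduces no spurious singularity; the only singularity is the integrable one at $q=c$.

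Next I would apply the substitution $q^2=2E_-u$ used in Proposition~\ref{prop:alowEmonotonic}. A direct calculation reduces the display above to
\[
 a(E_-)=\sqrt{2}\int_0^1\frac{1-(1-u)E_-}{[u(1-u)(2-(1-u)E_-)]^{1/2}}\,\d u,
\]
which is identical to the formula established there for the small-energy branch. Thus the half-width is given by a single analytic expression valid for all $0<E_-<2$, and I may take over verbatim the derivative
\[
 \frac{\d}{\d E_-}a(E_-)=\frac{1}{\sqrt{2}}\int_0^1\frac{-(1-u)^{1/2}(3-(1-u)E_-)}{[u(2-(1-u)E_-)^3]^{1/2}}\,\d u,
\]
differentiation under the integral being justified on compact subintervals of $[1,2)$ by an integrable dominating function. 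It then remains only to check the sign on the new range: for $0\le u\le 1$ and $1\le E_-<2$ we have $(1-u)E_-\le E_-<2$, so both $2-(1-u)E_->0$ and $3-(1-u)E_->1>0$; hence the integrand is strictly negative on $(0,1)$ and $a(E_-)$ is monotonically decreasing, as required.

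I expect the main obstacle to be the bookkeeping in the first step: correctly matching the two denominators (which hinges on $E_+=2-E_-$), tracking the signs and the factors of two when folding the even integrands, and verifying that the critical-point endpoint $q=b$ is genuinely regular so that the three spliced sections assemble into the single clean integral over $[0,c]$. Once that reduction is secured, the result is immediate from the already-computed derivative and the elementary sign estimate above, and no new analytic difficulty arises on $1\le E_-<2$ (the interval being open at $E_-=2$ is exactly what keeps $2-(1-u)E_-$ positive).
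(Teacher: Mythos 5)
Your proof is correct, but it takes a genuinely different route from the paper's. The paper treats the two time maps separately: using the substitutions $q^2=2E_-u$ and $q^2=2(E_--1)u$ it shows that $L_-(E_-)$ is monotonically decreasing and $L_+(E_-)$ is monotonically increasing, and then concludes for the combination \eqref{eq:ahighE}. You instead observe that, because $E_+=2-E_-$ makes the two integrands share the common denominator $D(q)=\left[(2E_--q^2)(4-2E_-+q^2)\right]^{1/2}$, the combination $2L_--L_+$ collapses to the single signed integral $\int_{-\sqrt{2E_-}}^{\sqrt{2E_-}}\bigl(q^2-2(E_--1)\bigr)/D(q)\,\d q$, which is exactly the low-energy formula \eqref{eq:alowE}; so the expression for $a(E_-)$ continues analytically across $E_-=1$, the derivative computed in the proof of Proposition \ref{prop:alowEmonotonic} applies verbatim, and only the sign check needs extending to $1\le E_-<2$. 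Your bookkeeping is sound: the folding of the even integrands, the identification of the denominators, the regularity at $q=\sqrt{2(E_--1)}$ (where the numerator vanishes and $D=2$), and the sign estimate $3-(1-u)E_->1>0$ are all correct. Your route is arguably cleaner---one substitution, one derivative, and a single formula for $a$ valid on all of $0<E_-<2$, which makes the smooth join of the two branches at $E_-=1$ in figure \ref{fig:areavswidth} transparent. What it does not deliver, and what the paper's separate treatment buys, are the individual monotonicity facts for $L_-$ and $L_+$: these are reused in the proofs of the uniqueness of $E^*$ [equation \eqref{eq:Estar}] and of Proposition \ref{prop:origin}, so in the paper's overall economy the longer two-part argument earns its keep.
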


\begin{proof}
We first show that $L_-(E_-)$ is a monotonically decreasing function of $E_-$. By the change of variables $q^2=2E_-u$, we can write
\begin{equation}
 L_-(E_-) = \dfrac{1}{\sqrt{2}}\int_{1-1/E_-}^1\dfrac{1-(1-u)E_-}{[u(1-u)(2-(1-u)E_-)]^{1/2}}\d u.
\end{equation}
It follows after a little algebraic manipulation that
\begin{equation}
 \dfrac{\d L_-}{\d E_-} = \dfrac{1}{2\sqrt{2}}\int_{1-1/E_-}^1 \dfrac{-(3-(1-u)E_-)(1-u)}{[u(1-u)(2-(1-u)E_-)]^{1/2}(2-(1-u)E_-)}\d u.
\end{equation}
Since $1\leq E_-< 2$ and $0<u\leq1$, it is clear that the integrand is negative and thus $L_-(E_-)$ is monotonically decreasing in $E_-$ as required.

We now show that $L_+(E_-)$ is a monotonically increasing function of $E_-$. By the change of variables $q^2=2(E_--1)u$, we can write
\begin{equation}
 L_+(E_-) = 2(E_--1)^{3/2}\int_0^1\dfrac{1-u}{[u(E_--(E_--1)u)(2-E_-+(E_--1)u)]^{1/2}}\d u.
\end{equation}
Now, $(E_--1)^{3/2}$ is a monotonically increasing function of $E_-$, and it follows after a little more algebraic manipulation that
\begin{multline}
 \dfrac{\d}{\d E_-}\int_0^1\dfrac{1-u}{[u(E_--(E_--1)u)(2-E_-+(E_--1)u)]^{1/2}}\d u = \\
 \sqrt{2}\int_0^1\dfrac{(E_--1)(1-u)^3\d u}{[u(2-E_-+(E_--1)u)(u+(1-u)E_-)]^{1/2}(E_--(E_--1)u)(2-E_-+(E_--1)u)}.
\end{multline}
Since $1\leq E_-< 2$ and $0<u\leq1$, it is clear that the integrand is positive and thus $L_+(E_-)$ is monotonically increasing in $E_-$ as required.

It follows that the half-width $a(E_-)$ given by \eqref{eq:ahighE} is a monotonically decreasing function of $E_-$ as required.
\end{proof}

The half-area $A$ is again straightforward to calculate, because the symmetry of the profile leads to many cancellations; it can readily be shown that it is given by
\begin{align}\label{eq:areahighE}
 A(E_-) & = \int_{y_1}^0(\zeta_3+q_0)\,\d y -\int_{y_1}^{y_2}(\zeta_2+q_0)\,\d y + \int_{-a}^{y_2}(\zeta_1+q_0)\,\d y \nonumber\\
 & = q_0(E_-)a(E_-)
\end{align}
as for the low-energy solution. We will plot the half-area against the half-width in \S\ref{sec:bifurcation}.

\begin{figure}[tbp]
\begin{center}
\setlength{\unitlength}{1.0cm}

 \begin{picture}(14,7)

 \put(0.0,0.0){\epsfig{figure={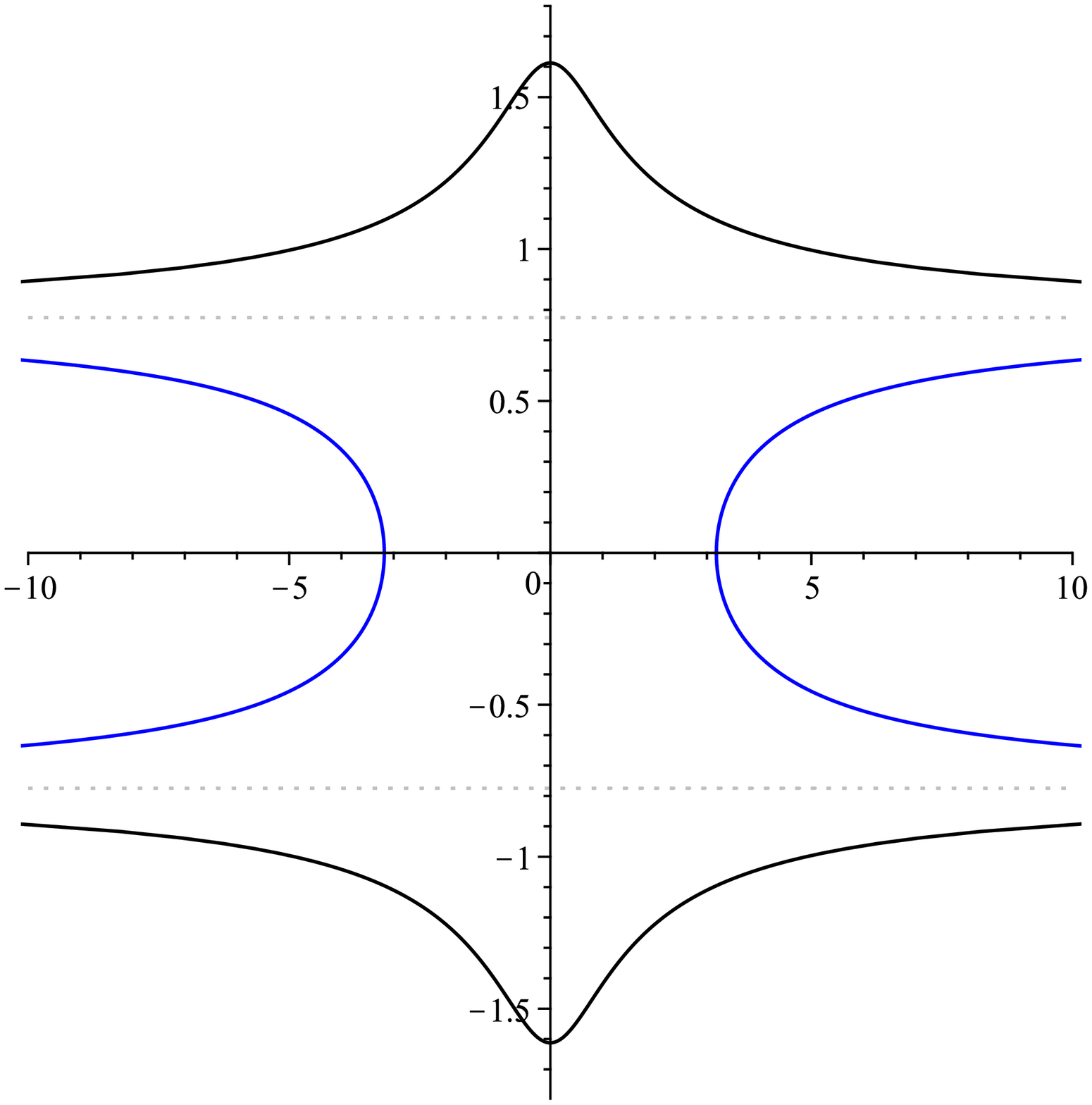},width=7\unitlength}}
 \put(7.0,0.0){\epsfig{figure={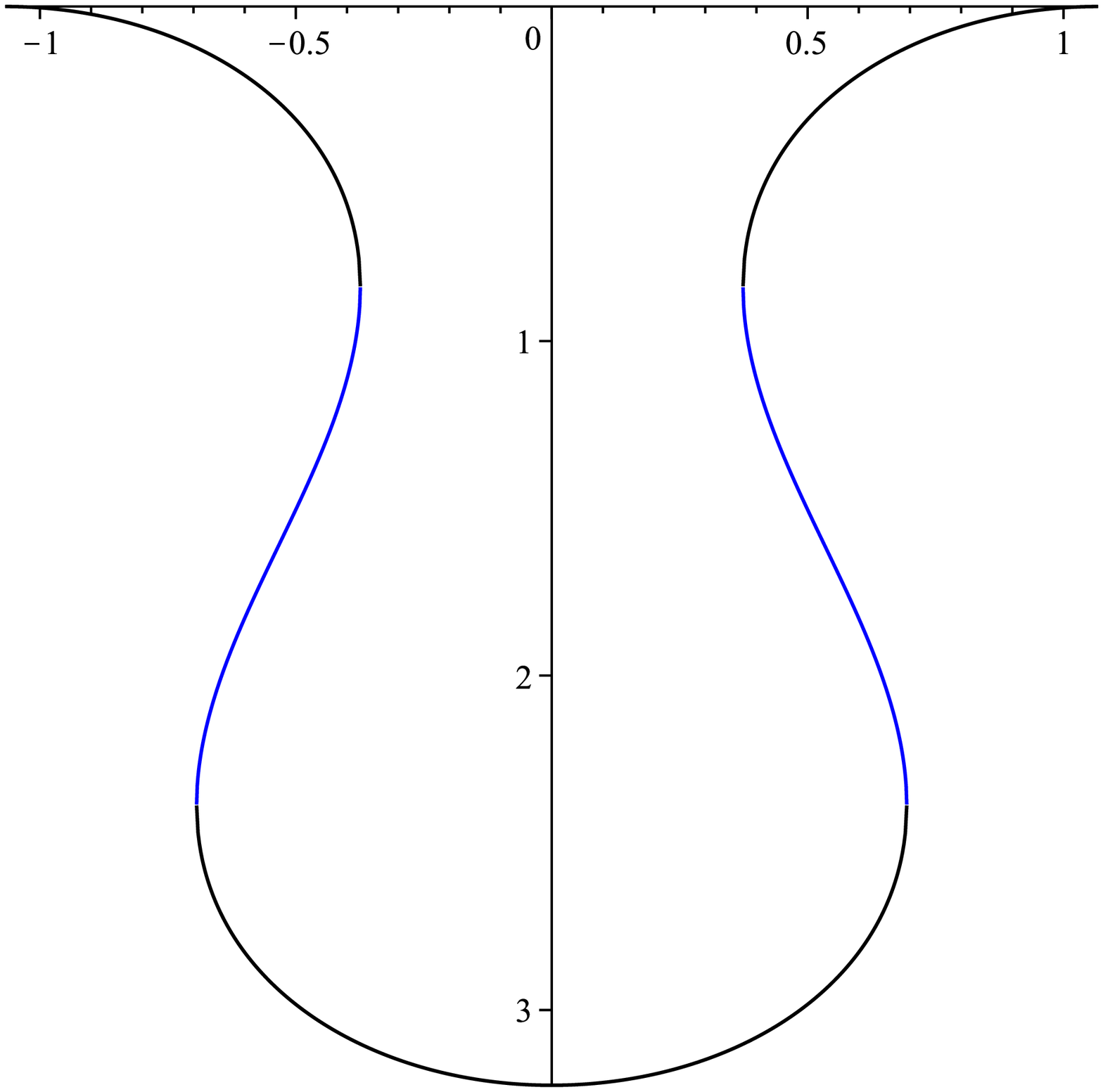},width=7\unitlength}}

 \put(0.0,6.5){\footnotesize{(a)}}
 \put(6.5,6.5){\footnotesize{(b)}}

 \put(5.0,3.0){\scriptsize{$p$}}
 \put(3.7,5.2){\scriptsize{$q$}}

 \put(10.75,3.3){\scriptsize{$z$}}
 \put(12.5,7.0){\scriptsize{$y$}}

 \put(5.2,1.3){\vector(3,1){0.5}}
 \put(1.3,1.5){\vector(3,-1){0.5}}
 \put(5.7,5.45){\vector(-3,1){0.5}}
 \put(1.8,5.6){\vector(-3,-1){0.5}}

 \put(5.2,2.35){\color{blue}{\vector(3,-1){0.5}}}
 \put(5.7,4.75){\color{blue}{\vector(-3,-1){0.5}}}
 \put(1.3,2.2){\color{blue}{\vector(3,1){0.5}}}
 \put(1.8,4.6){\color{blue}{\vector(-3,1){0.5}}}

 \put(8.0,6.0){\scriptsize{$z=\zeta_1$}}
 \put(8.1,3.8){\color{blue}{\scriptsize{$z=\zeta_2$}}}
 \put(8.0,0.5){\scriptsize{$z=\zeta_3$}}

 \put(4.0,0.5){\scriptsize{$k=-1$}}
 \put(4.0,6.4){\scriptsize{$k=-1$}}
 \put(1.3,3.9){\color{blue}{\scriptsize{$k=+1$}}}
 \put(5.0,3.9){\color{blue}{\scriptsize{$k=+1$}}}
 
 \end{picture}
 \caption{The solution corresponding to $E_-=1.3$. (a) The orbit in the $(p,q)$ phase plane. (b) The rivulet profile made up of the sections $z=\zeta_1(y)$, $z=\zeta_2(y)$, and $z=\zeta_3(y)$ (note that the $z$-coordinate increases downward). Sections and arrows in black correspond to the $k=-1$ phase plane; arrows show $y$ increasing. Sections and arrows in blue correspond to the $k=+1$ phase plane; arrows show $y$ decreasing. The dotted gray lines in (a) are $q=\pm q^-_{\infty}(E_-)$. We have $q_0(E_-)\approx 1.612$, $q^-_{\infty}(E_-) \approx 0.775$, $a(E_-)\approx 1.068$, $A(E_-)\approx 1.722$.}
 \label{fig:E=1.3}

\end{center}
\end{figure}

Figure \ref{fig:E=1.3} illustrates a typical solution. The orbit starts in the $k=-1$ phase plane at $(0,-q_0(E_-)$ and follows the construction described above. Note the symmetry of the profile within the middle section (blue in the figure) and between the upper and lower sections (black in the figure).

\begin{remark}\label{rem:overhangs}
It is clear from the construction described above that the solution must return to $(p,q)=(0,-q_0(E_-))$ after having fully traversed two unbounded orbits in the $k=-1$ phase plane and two in the $k=+1$ phase plane. This demonstrates that it is not possible to construct solutions with more than a single overhang.
\end{remark}

\subsection{Pinch-off\label{sec:pinchoff}}

From Proposition \ref{prop:ahighEmonotonic} we see that as the energy $E_-$ increases, the extent of the overhanging (blue) region increases and the overall width of the rivulet decreases. As a consequence, the profile eventually `pinches off' when the first transition point occurs at $y=0$; the self-intersection occurs at some $z=\zeta_{\mathrm{p}}\ > 0$. Pinch-off occurs when $E_-=E^*\in (1,2)$ such that
\begin{equation}\label{eq:Estar}
 L_-(E^*) = L_+(E^*).
\end{equation}

\begin{prop}
The value of $E^*$ is unique.
\end{prop}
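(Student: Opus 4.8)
The plan is to reduce the claim to the monotonicity already established in Proposition~\ref{prop:ahighEmonotonic}. I would introduce $g(E_-) = L_-(E_-) - L_+(E_-)$ on the interval $1 \leq E_- < 2$, so that by \eqref{eq:Estar} the pinch-off energy $E^*$ is precisely a zero of $g$. The crux is that $g$ is strictly monotone: Proposition~\ref{prop:ahighEmonotonic} shows that $L_-$ is strictly decreasing while $L_+$ is strictly increasing on this interval, so $g$ is a difference of a strictly decreasing and a strictly increasing function and is therefore itself strictly decreasing. A strictly decreasing continuous function vanishes at most once, which delivers uniqueness at once.

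To be sure the statement is not vacuous I would also confirm that a zero exists, so that $E^*$ is genuinely well-defined. For this I examine the endpoints of the interval. At $E_- = 1$ we have $q^-_{\infty}(1) = \sqrt{2(E_- - 1)}|_{E_-=1} = 0$, so $L_+(1) = L(0,0;1,+1) = 0$, whereas \eqref{eq:ahighE} gives $L_-(1) = \tfrac12 a(1) > 0$; hence $g(1) > 0$. As $E_- \to 2^-$ I would show that $L_-$ stays bounded (it is decreasing, hence bounded above by $L_-(1)$, and its integral representation remains integrable at both limits) while $L_+ \to +\infty$, so that $g(E_-) \to -\infty$. Continuity of $g$ then forces a sign change, and the intermediate value theorem supplies a zero, which by the strict monotonicity above is the unique $E^*$.

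The one genuinely computational step — and the main obstacle — is verifying the blow-up of $L_+$ as $E_- \to 2^-$. Using the substituted representation of $L_+(E_-)$ from the proof of Proposition~\ref{prop:ahighEmonotonic}, the integrand near $u = 0$ is controlled by the factor $[\,u\,((2-E_-) + (E_- - 1)u)\,]^{-1/2}$; for $E_- < 2$ this is integrable like $u^{-1/2}$, but the shrinking factor $2 - E_-$ degrades the integrability so that $\int_0^\delta [\,u\,((2-E_-)+u)\,]^{-1/2}\,\d u$ grows like $\log\!\big(1/(2-E_-)\big)$ and diverges in the limit. Making this limiting behaviour rigorous — for instance by isolating this singular factor, bounding the remaining (smooth, bounded) part of the integrand below by a positive constant on a fixed subinterval, and comparing with the divergent model integral — is the only delicate estimate; everything else follows directly from the monotonicity already proved.
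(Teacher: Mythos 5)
Your uniqueness argument is exactly the paper's: since $L_-$ is monotonically decreasing and $L_+$ is monotonically increasing on $1\leq E_-<2$ (both facts taken from the proof of Proposition~\ref{prop:ahighEmonotonic}), the equation $L_-(E^*)=L_+(E^*)$ can have at most one root. So for the statement as literally posed, you and the paper coincide.

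Where you genuinely go further is on existence, which the paper does not prove: it simply reports the numerical value $E^*\approx 1.462$, so strictly speaking it establishes only ``at most one'' root. Your endpoint analysis closes that gap. The value $g(1)=L_-(1)-L_+(1)=\tfrac{1}{2}a(1)>0$ is correct, since $q^-_{\infty}(1)=0$ forces $L_+(1)=0$ while \eqref{eq:ahighE} and the paper's evaluation $a(1)\approx 1.694$ give $L_-(1)=\tfrac{1}{2}a(1)$. Your identification of the logarithmic blow-up of $L_+$ as $E_-\to 2^-$ is also right: in the representation from the proof of Proposition~\ref{prop:ahighEmonotonic}, the singular factor $\left[u\left((2-E_-)+(E_--1)u\right)\right]^{-1/2}$ produces, after rescaling $u=(2-E_-)v$, an integral comparable to $\log\!\left(1/(2-E_-)\right)$, while the remaining factor is bounded below by a positive constant on a fixed subinterval and $L_-$ stays bounded (being decreasing and nonnegative). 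Continuity plus the intermediate value theorem then yields a root, and monotonicity makes it unique. The trade-off: the paper's proof is two lines and delegates existence to numerics; yours is longer but makes $E^*$ well-defined without any computation, which is arguably the more satisfactory statement given that Proposition~\ref{prop:pinchoffarea} later emphasises independence from the numerical value of $E^*$.
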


\begin{Proof}
Since (from the proof of Proposition \ref{prop:ahighEmonotonic}), $L_-(E^*)$ is monotonically decreasing in $E^*$ and $L_+(E^*)$ is monotonically increasing in $E^*$, the value of $E^*$ that satisfies \eqref{eq:Estar} is unique. We may determine the value of $E^*$ numerically as $E^*\approx 1.462$.
\end{Proof}

For values of $E>E^*$, rivulet profiles can still be constructed as described in \S\ref{subsec:largeenergy}, but they are now self-intersecting and do not describe a physically realisable free surface. We can still define the half-area of a self-intersecting solution, and will do so when we construct bifurcation diagrams in \S\ref{sec:bifurcation}. However, it is important to note that self-intersection is a topological change which is not captured when the bifurcation diagrams are plotted in terms of half-area.

\begin{prop}\label{prop:pinchoffarea}
For the pinch-off solution with $E=E^*$, the region enclosed by the free surface for $z > \zeta_{\mathrm{p}}$ has half-area $1$.
\end{prop}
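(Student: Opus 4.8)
The plan is to compute the half-area directly as an integral over the overhanging and lower sections of the profile, reduce it to a single integral along the $k=-1$ orbit using the conserved quantity \eqref{eq:Hdef}, and evaluate that integral explicitly.

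First I would pin down the geometry at pinch-off. Since pinch-off occurs when $L_-(E^*)=L_+(E^*)=:L$, equations \eqref{eq:y1y1highE}--\eqref{eq:ahighE} give $y_2=0$ and $y_1=-a=-L$; that is, the transition point $y_2$ reaches the midline while $y_1$ coincides (in $y$) with the contact line. The self-intersection is precisely the critical point $y_2$, whose height is $\zeta_{\mathrm{p}}=q_0(E^*)-q^-_{\infty}(E^*)$ (using $\zeta=q+q_0(E^*)$ and the value $q=-q^-_{\infty}$ there). The region $z>\zeta_{\mathrm{p}}$ is the hanging bulb; the section $\zeta_1$ lies entirely in $z\leq\zeta_{\mathrm{p}}$, so for each $y\in(-L,0)$ the left half of the bulb is bounded above by the overhang $z=\zeta_2(y)$ and below by $z=\zeta_3(y)$, both of which are single-valued and span $y\in[-L,0]$ at pinch-off. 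Hence the required half-area is
\begin{equation}
A_{\mathrm{bulb}}=\int_{-L}^{0}\left(\zeta_3(y)-\zeta_2(y)\right)\d y.
\end{equation}

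Next I would convert each contribution to an integral in $q$, writing $\zeta=q+q_0(E^*)$ and $\d y=\d q/p$, with $p$ obtained from $H(p,q;k)=E$ as in \eqref{eq:dqdy}. Along the overhang $\zeta_2$ ($k=+1$, energy $2-E^*$) the variable $q$ runs symmetrically over $[-q^-_{\infty},q^-_{\infty}]$, so the odd part $q/|p|$ integrates to zero and the even part yields $q_0(E^*)\,L_+(E^*)$, recognising the time map \eqref{eq:Lq1q2}. Along $\zeta_3$ ($k=-1$, energy $E^*$) the variable $q$ runs over $[q^-_{\infty},q_0]$, contributing $\int_{q^-_{\infty}}^{q_0}(q/p)\,\d q+q_0(E^*)\,L_-(E^*)$. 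Because $L_-(E^*)=L_+(E^*)$ at pinch-off, the two $q_0L$ terms cancel and
\begin{equation}
A_{\mathrm{bulb}}=\int_{q^-_{\infty}(E^*)}^{q_0(E^*)}\dfrac{q}{p}\,\d q,
\qquad p=\left[\left(1-E^*+\tfrac12 q^2\right)^{-2}-1\right]^{1/2}.
\end{equation}

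Finally I would evaluate this by the substitution suggested by \eqref{eq:Hdef}: set $g=(1+p^2)^{-1/2}=1-E^*+\tfrac12 q^2$, so that $q\,\d q=\d g$ and $1/p=g/(1-g^2)^{1/2}$. The endpoints $q=q^-_{\infty}$ and $q=q_0$ map to $g=0$ and $g=1$, whence
\begin{equation}
A_{\mathrm{bulb}}=\int_0^1\dfrac{g}{(1-g^2)^{1/2}}\,\d g=\left[-(1-g^2)^{1/2}\right]_0^1=1.
\end{equation}
The final substitution is routine; the main obstacle is the bookkeeping of the second step: correctly identifying which orbit sections bound the bulb, orienting them so that the signs of $p$ and the integration limits are consistent, and verifying that $\zeta_2$ and $\zeta_3$ are single-valued and non-crossing on $(-L,0)$ so that $\int(\zeta_3-\zeta_2)\,\d y$ genuinely measures the enclosed area. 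It is worth emphasising that the pinch-off condition $L_-=L_+$ enters twice, in fixing the limits $y_1=-a$, $y_2=0$ and in cancelling the $q_0L$ terms, so the value $1$ is special to $E=E^*$.
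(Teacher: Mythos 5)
Your proof is correct and follows essentially the same route as the paper: both express the half-area of the bulb as $\int_{-a}^{0}\left(\zeta_3-\zeta_2\right)\d y$, convert each piece to a $q$-integral via the conserved quantity \eqref{eq:Hdef}/\eqref{eq:dqdy}, and evaluate the result exactly to $1$. The only cosmetic difference is that you dispose of the overhang ($\zeta_2$) contribution by oddness and then substitute $g=1-E^*+\tfrac12 q^2$ in the remaining integral, whereas the paper keeps that (vanishing) contribution and merges the two integrals into a single integral over $\bigl[-\sqrt{2(E^*-1)},\sqrt{2E^*}\bigr]$ before evaluating it directly.
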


\begin{Proof}
By construction, the half-area of the region concerned is given by
\begin{equation}
 A_{\mathrm{p}} = \int_{-a}^0\zeta_3(y)\,\d y - \int_{-a}^0\zeta_2(y)\,\d y.
\end{equation}
Using \eqref{eq:dqdy}, we can rewrite this as
\begin{align}
 A_{\mathrm{p}} & = \int_{q^-_{\infty}(E^*)}^{q_0(E^*)} \dfrac{q\,\d q}{\left[ \frac{1}{(1-E^*+\frac{1}{2}q^2)^2} - 1 \right]^{1/2}} +\int_{-q^+_{\infty}(2-E^*)}^{q^+_{\infty}(2-E^*)} \dfrac{q\,\d q}{\left[ \frac{1}{(1-(2-E^*)-\frac{1}{2}q^2)^2} - 1 \right]^{1/2}} \\
 {} & = \int_{\sqrt{2(E^*-1)}}^{\sqrt{2E^*}} \dfrac{q\,\d q}{\left[ \frac{1}{(1-E^*+\frac{1}{2}q^2)^2} - 1 \right]^{1/2}} +\int_{-\sqrt{2(E^*-1)}}^{\sqrt{2(E^*-1)}} \dfrac{q\,\d q}{\left[ \frac{1}{(1-(2-E^*)-\frac{1}{2}q^2)^2} - 1 \right]^{1/2}} \nonumber\\
 {} & = \int_{-\sqrt{2(E^*-1)}}^{\sqrt{2E^*}} \dfrac{q\,\d q}{\left[ \frac{1}{(1-E^*+\frac{1}{2}q^2)^2} - 1 \right]^{1/2}} \nonumber\\
 {} & = 1 \nonumber
\end{align}
by evaluating the integral directly. (We note that this result does not rely on the numerical value of $E^*$.)

\end{Proof}

% \clearpage
\section{Bifurcation analysis\label{sec:bifurcation}}

Although we have used $E_-$ as the control parameter when constructing solutions, it is convenient instead to regard the half-width $a$ as the parameter; by Propositions \ref{prop:alowEmonotonic} and \ref{prop:ahighEmonotonic} we know that $a$ is a monotonically decreasing function of $E_-$. To characterise a solution we consider the half-area $A$; for $0\leq E_-\leq 1$ this can be thought of as the $L^1$-norm of the solution $\zeta(y)$, while for $E_->1$ we may interpret it as the $L^1$-norm of a suitably defined (discontinuous) weak solution constructed using an equal-area rule. (For such a construction in an analogous problem, see \cite{BurnsGrinfeld2011}.)

Figure \ref{fig:areavswidth} shows the half-width $A$ plotted against $a$ for $0\leq E_- \leq 1.7$. The following propositions establish the crucial features of this plot.

\begin{figure}[tbp]
\begin{center}
\setlength{\unitlength}{1.0cm}

 \begin{picture}(12,10.5)

 \put(1.0,0.0){\epsfig{figure={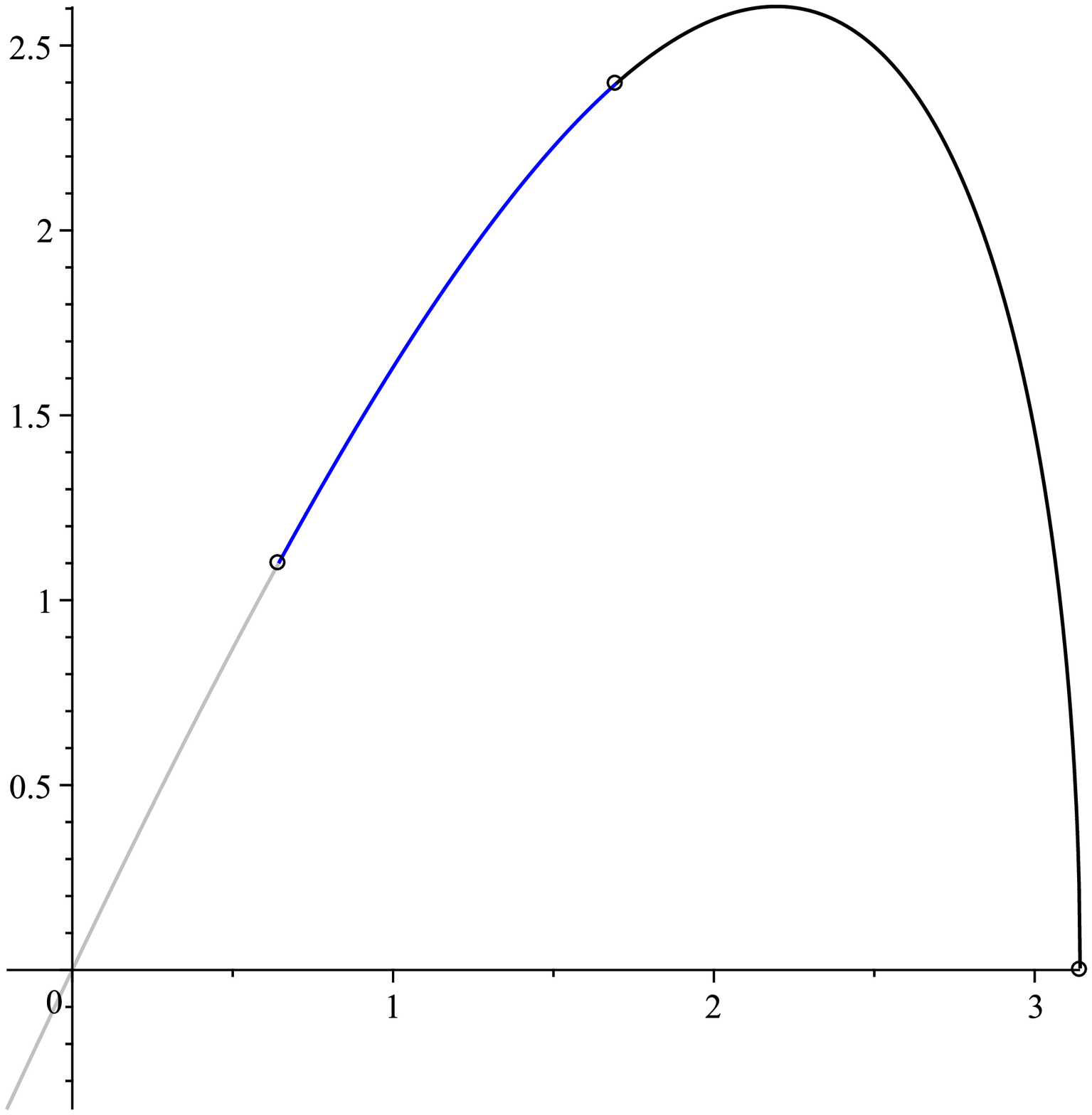},width=10\unitlength}}

 \put(6.1,0.6){\footnotesize{$a$}}
 \put(1.0,5.3){\footnotesize{$A$}}

 \put(10.9,1.3){\footnotesize{$E_-=0$}}
 \put(6.75,8.85){\footnotesize{$E_-=1$}}
 \put(3.85,4.75){\footnotesize{$E_-=E^*$}}

 \put(11.0,4.0){\epsfig{figure={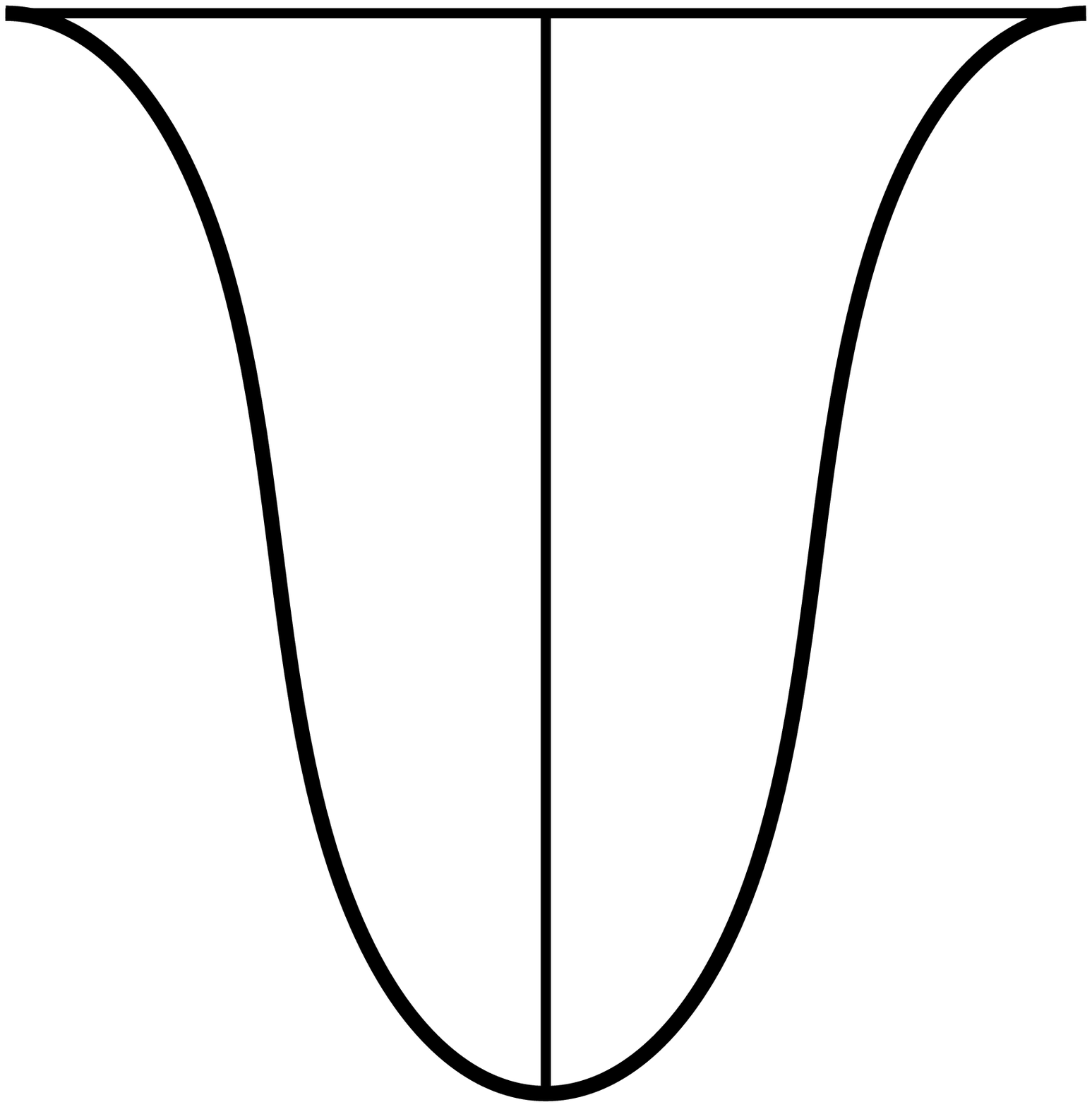},width=2\unitlength}}
 \put(3.7,8.0){\epsfig{figure={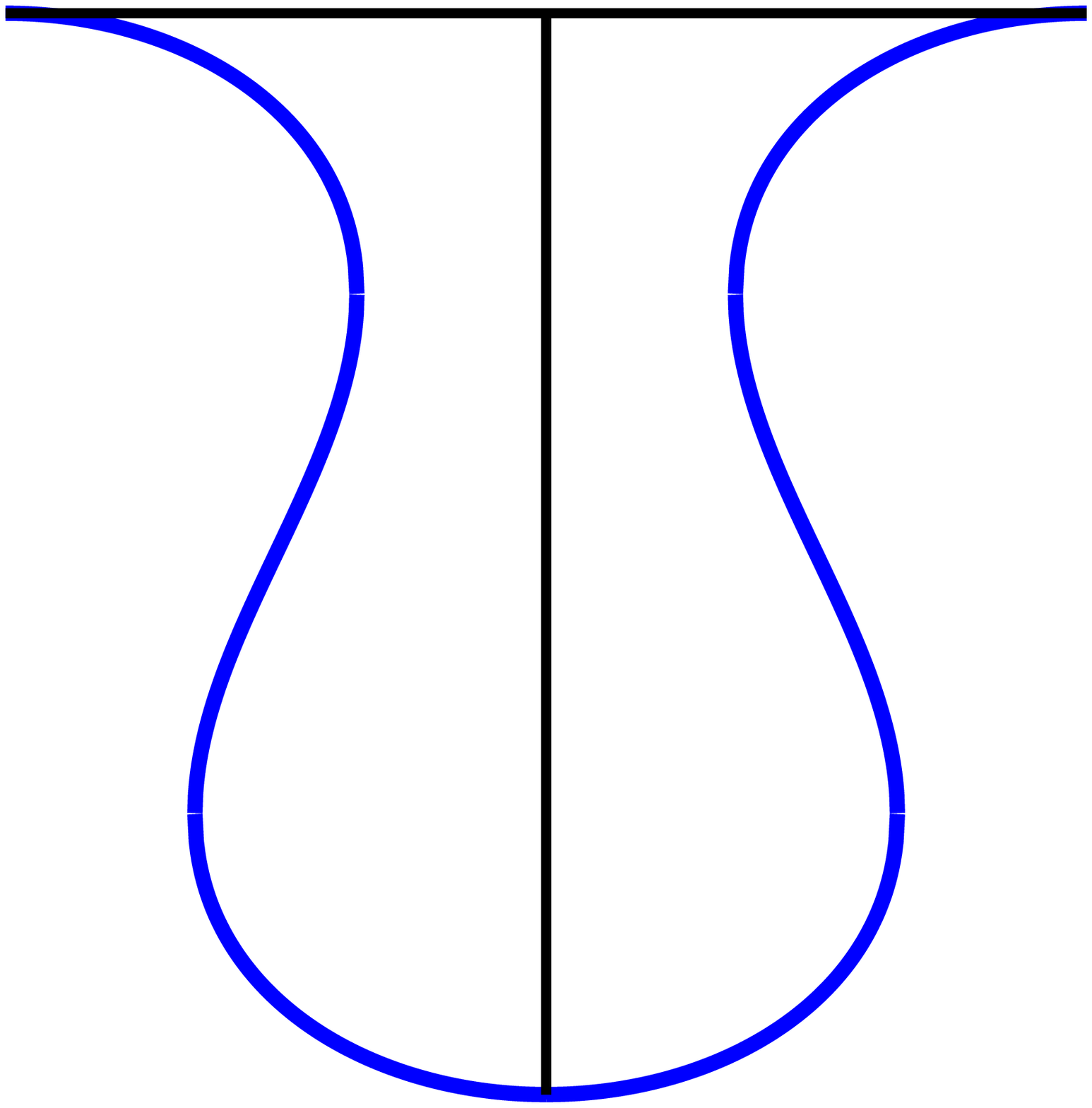},width=2\unitlength}}
 \put(3.7,2.0){\epsfig{figure={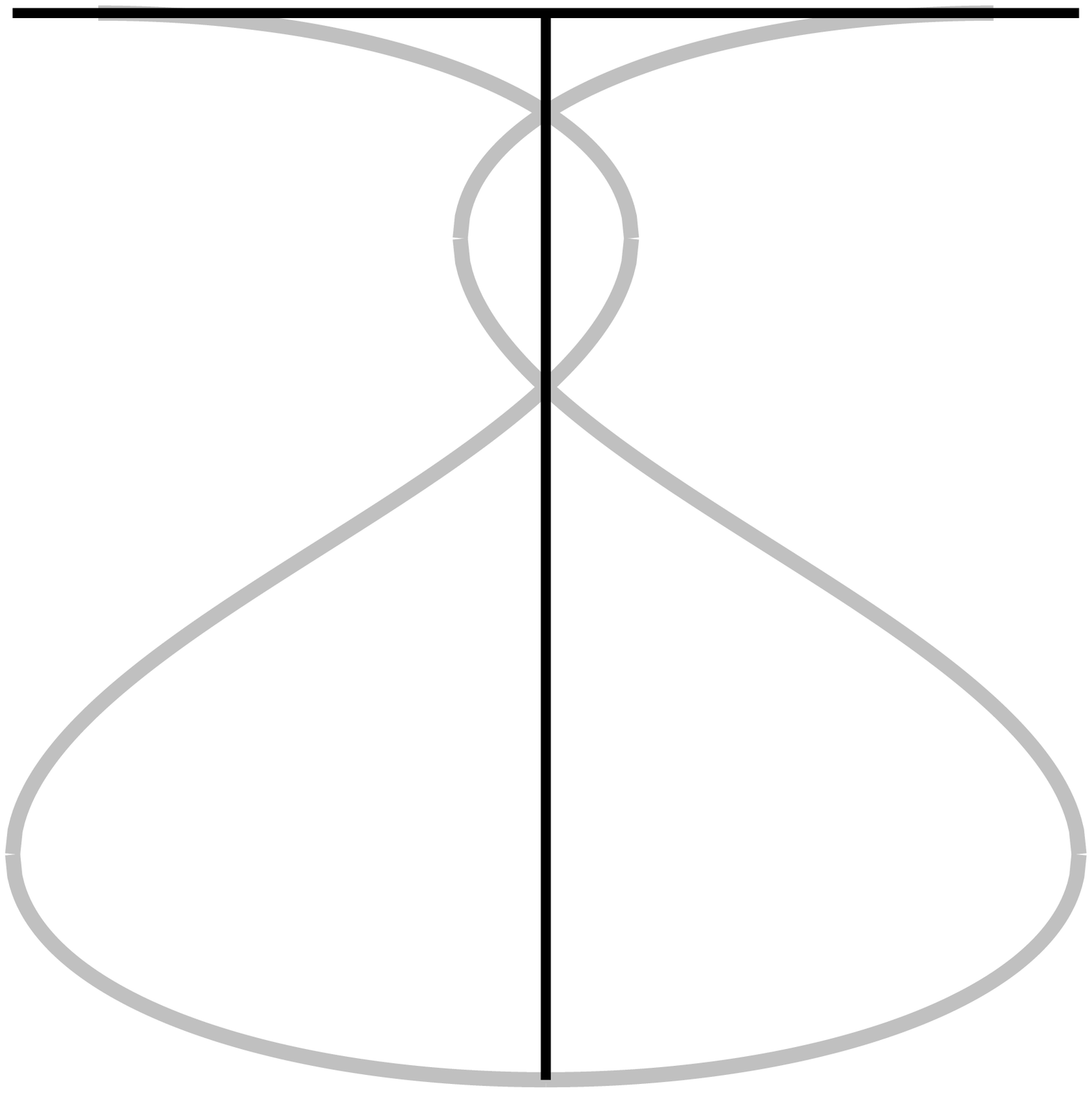},width=2\unitlength}}

 \end{picture}
 \caption{The half-area $A$ of a rivulet plotted against the half-width $a$. The black portion of the curve represents $0 < E_- < 1$ (rivulets without an overhang; the blue portion represents $1 < E_- < E^*$ (rivulets with an overhang); the gray portion represents $E_- > E^*$ (unphysical rivulets with a self-intersecting free surface). The insets (not to scale) indicate the form of the rivulet in each case.}
  \label{fig:areavswidth}

\end{center}
\end{figure}

\begin{prop}\label{prop:pitchfork}
The non-trivial solution branch emerges from the trivial solution branch $\zeta\equiv 0$ through a subcritical pitchfork bifurcation at $(a,A)=(\pi,0)$.
\end{prop}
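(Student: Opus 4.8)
The plan is to use the explicit parametrisation of the non-trivial branch by $E_-$ that has already been constructed in \S\ref{subsec:smallenergy}, together with the reflection symmetry of the problem, rather than to invoke an abstract Crandall--Rabinowitz argument. The trivial branch $\zeta\equiv0$ corresponds to the fixed point $(p,q)=(0,0)$ and solves the boundary-value problem for every half-width $a$, while the non-trivial branch is the family of closed orbits with $0<E_-\leq1$. Three things must then be established: that the two branches meet precisely at $(a,A)=(\pi,0)$, that the meeting is a pitchfork, and that it is subcritical. I would treat these in turn.

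First I would locate the emanation point. Since $A(E_-)=q_0(E_-)a(E_-)$ with $q_0(E_-)=\sqrt{2E_-}\to0$, we have $A\to0$ as $E_-\to0^+$, so it remains only to show $a(E_-)\to\pi$. Letting $E_-\to0$ in the change-of-variables form of \eqref{eq:alowE} used in the proof of Proposition \ref{prop:alowEmonotonic} collapses the integrand (up to the overall $\sqrt2$) to $[2u(1-u)]^{-1/2}$, so that $a(0)=\int_0^1[u(1-u)]^{-1/2}\,\d u=\pi$ by the Beta integral $B(\tfrac12,\tfrac12)$. This pins the emanation point at $(\pi,0)$ and incidentally recovers the $n=1$ lubrication width.

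Next I would identify the pitchfork structure. Linearising the \emph{full} equation \eqref{eq:zetaode} about $\zeta\equiv0$ with $k=-1$ gives $\zeta'''+\zeta'=0$ — which coincides with the lubrication operator \eqref{eq:zetaode_lub}, since $[1+(\zeta')^2]^{3/2}\to1$ there — whose even eigenfunction $1+\cos y$ satisfies the clamped conditions exactly when $a=\pi$, and one checks there is no odd null vector, so the eigenvalue is simple. The problem is invariant under $\zeta\mapsto-\zeta$ (equivalently $q\mapsto-q$), which acts as $-1$ on this one-dimensional kernel, pairing each physical solution with the reflected one of Remark \ref{remark:restrictedsols}; the $\Z_2$-symmetry then forces a pitchfork rather than a transcritical bifurcation, with the two arms being the $\zeta\geq0$ and $\zeta\leq0$ families sharing a common $(a,A)$ curve. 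To settle the direction I would expand $a(E_-)$ to first order: keeping the $O(E_-)$ term in the same integrand and using $B(\tfrac12,\tfrac32)=\pi/2$ yields $a(E_-)=\pi-\tfrac{3\pi}{8}E_-+O(E_-^2)$, and since $h_{\mathrm m}=2\sqrt{2E_-}$ this reads $a-\pi=-\tfrac{3\pi}{64}h_{\mathrm m}^2+O(h_{\mathrm m}^4)$. The \emph{negative} coefficient shows both arms lie in $a<\pi$, so the branch bends back below the critical width — the subcritical case — equivalently $A\sim 4\sqrt{\pi/3}\,\sqrt{\pi-a}$ as $a\to\pi^-$.

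The main obstacle is the first-order expansion in the last step: the sign of the $O(E_-)$ coefficient is precisely what separates sub- from supercriticality, and it must be extracted carefully because both the numerator and the square-root denominator of the integrand carry $E_-$-dependence that contributes at this order. A secondary point needing a word of justification is that the explicitly computed curve really is the local bifurcating branch; here the monotonicity of Proposition \ref{prop:alowEmonotonic} guarantees that $E_-\mapsto a(E_-)$ is invertible near $E_-=0$, so the family is a genuine single-valued branch in $a$ and no bifurcating solutions are overlooked.
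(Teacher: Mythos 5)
Your proposal is correct and follows essentially the same route as the paper's proof: the bifurcation point is located by taking $E_-\to0$ in the rescaled time-map integral (giving $a\to\pi$, $A\to0$), the expansion $a(E_-)=\pi-\tfrac{3}{8}\pi E_-+o(E_-)$ together with $A\sim\tfrac{4\sqrt{\pi}}{\sqrt{3}}(\pi-a)^{1/2}$ establishes subcriticality, and the $\zeta\mapsto-\zeta$ symmetry of Remark \ref{remark:restrictedsols} yields the pitchfork structure. Your explicit linearisation about $\zeta\equiv0$ (kernel spanned by $1+\cos y$ at $a=\pi$, with no odd null vector) merely spells out what the paper delegates to its parenthetical remark on Liapunov--Schmidt reduction.
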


\begin{Proof}
The bifurcation point corresponds to $E_-=0$. From \eqref{eq:alowE} we have
\begin{align}
 a(E_-) & = \int_{-\sqrt{2E_-}}^{\sqrt{2E_-}} \dfrac{2(1-E_-)+q^2}{\left[(2E_--q^2)(4-2E_-+q^2)\right]^{1/2}}\d q\\
 {} & = \sqrt{2}\int_{-1}^1\dfrac{1+(u^2-1)E_-}{[(1-u^2)(2+(u^2-1)E_-]^{1/2}}\,\d u, \label{eq:aEto0}
\end{align}
and thus
\begin{equation}
 \lim_{E_-\to0} a(E_-) = \int_{-1}^1\dfrac{\d u}{\sqrt{1-u^2}} = \pi.
\end{equation}
From \eqref{eq:arealowE} it follows that $A(E_-) = a(E_-)\sqrt{2E_-} \to0$ as $E_-\to 0$. Thus the bifurcation point is $(a,A)=(\pi,0)$ as required.

We recall from Remark \ref{remark:restrictedsols} that every physical solution with $\zeta\geq 0$ and $A> 0$ has a counterpart unphysical solution with $\zeta\leq0$ and thus $A<0$. Further, we can expand the integrand in \eqref{eq:aEto0} to obtain
\begin{equation}
 a(E_-) = \pi -\dfrac{3}{8}\pi E_- + o(E_-) \quad \text{and thus} \quad A(E_-) = \pi\sqrt{2E_-} + o(\sqrt{E_-}),
\end{equation}
and thus
\begin{equation}
 A(E_-) \sim \dfrac{4\sqrt{\pi}}{\sqrt{3}}(\pi-a)^{1/2}
\end{equation}
on the physical branch of solutions $A>0$.

From the symmetry of the solutions $A\gtrless 0$ and this local structure, it follows that the bifurcation at $a=\pi$ is a subcritical pitchfork. (This result can alternatively be obtained via Liapunov--Schmidt reduction \cite{GolubitskySchaeffer1985}.)
\end{Proof}

\begin{prop}\label{prop:origin}
The value of $E_-$ at which the non-trivial solution branch passes through $(a,A)=(0,0)$ satisfies $E_- >E^*$.
\end{prop}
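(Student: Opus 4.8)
The plan is to avoid locating the zero of the half-width explicitly, and instead to read off its sign at the pinch-off energy directly from the pinch-off condition. The central observation is that at $E_-=E^*$ the defining relation \eqref{eq:Estar}, namely $L_-(E^*)=L_+(E^*)$, can be substituted into the half-width formula \eqref{eq:ahighE}:
\[
 a(E^*) = 2L_-(E^*) - L_+(E^*) = L_-(E^*).
\]
Thus the half-width at pinch-off coincides with the single time map $L_-(E^*)$, and the whole argument reduces to controlling the sign of that quantity.

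Next I would verify that $L_-(E^*)>0$. This is immediate either from the geometric interpretation of $L_-$ as the (positive) lateral distance $y_2-(-a)$ between a contact line and a critical point, or directly from the integral representation used in the proof of Proposition~\ref{prop:ahighEmonotonic}: after the substitution $q^2=2E_-u$ the integrand is non-negative on $1-1/E_-<u<1$ and strictly positive in the interior, so $L_-(E^*)>0$, and hence $a(E^*)>0$.

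Finally I would invoke the monotonicity established in Proposition~\ref{prop:ahighEmonotonic}: $a(E_-)$ is strictly decreasing on $1\leq E_-<2$. If the non-trivial branch passes through $(a,A)=(0,0)$ at a value $E_-^{(0)}$, then since $a(E^*)=L_-(E^*)>0=a(E_-^{(0)})$, strict monotonicity forces $E_-^{(0)}>E^*$, which is the claim. (Note that $A=q_0(E_-)a(E_-)$ with $q_0>0$, so $A=0$ on the non-trivial branch is equivalent to $a=0$, and the origin-crossing is indeed governed by the vanishing of $a$.)

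The main obstacle is not the sign argument, which is short, but confirming that the branch reaches the origin at all, i.e.\ that $a(E_-)$ actually attains $0$ for some $E_-\in(E^*,2)$. Here one examines the behaviour as $E_-\to 2^-$: the map $L_-(E_-)$ stays finite, since its only endpoint singularity (at $q=-q_0$, where $2E_--q^2\to 0$) is an integrable inverse-square-root, whereas in $L_+(E_-)$ the regularising factor $2(2-E_-)$ in the denominator vanishes, the integrand develops a $1/|q|$ behaviour near $q=0$, and $L_+\to\infty$. Consequently $a=2L_--L_+\to-\infty$, and by continuity together with $a(E^*)>0$ the intermediate value theorem yields a zero $E_-^{(0)}\in(E^*,2)$, which is unique by monotonicity. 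This completes the existence part and, combined with the sign computation above, establishes $E_-^{(0)}>E^*$.
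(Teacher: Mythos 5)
Your proof is correct and is essentially the paper's own argument in lightly rearranged form: both rest on the monotonicity of $L_-$ (decreasing) and $L_+$ (increasing) established in Proposition \ref{prop:ahighEmonotonic}, the pinch-off relation \eqref{eq:Estar}, and the positivity of $L_-$, the only difference being that you evaluate $a(E^*)=2L_-(E^*)-L_+(E^*)=L_-(E^*)>0$ and then invoke monotonicity of $a$, whereas the paper argues directly from the zero condition $2L_-(E_-)=L_+(E_-)$. Your final paragraph, showing $L_+(E_-)\to\infty$ while $L_-$ stays bounded as $E_-\to 2^-$ so that the branch genuinely attains $a=0$ by the intermediate value theorem, goes beyond the paper, which treats the crossing only conditionally and records its location ($E_-\approx 1.652$) merely numerically in the subsequent remark.
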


\begin{Proof}
The solution branch passes through the origin when $a(E_-)=0$ and thus $A(E_-)=\sqrt{2E_-}a(E_-)=0$ [equation \eqref{eq:areahighE}]. From \eqref{eq:ahighE}, $a(E_-)=0$ when $2L_-(E_-)=L_+(E_-)$. But we know from the proof of Proposition \ref{prop:ahighEmonotonic} that $L_-(E_-)$ is monotonically decreasing and $L_+(E_-)$ is monotonically increasing, and from equation \eqref{eq:Estar} we know that the energy $E^*$ satisfies $L_-(E^*)=L_+(E^*)$. Thus $a(E_-)=0 \implies E_->E^*$.
\end{Proof}

\begin{remark}
Proposition \ref{prop:origin} means that pinch-off occurs before this solution branch reaches the origin. Numerically, we find that $a(E_-)=0$ for $E_-\approx 1.652$ (recall that $E^*\approx 1.462$).
\end{remark}

\begin{remark}
The lubrication analysis (\S\ref{sec:lubrication}) fails to capture the pitchfork bifurcation at $(a,A)=(\pi,0)$. This is not surprising, as it is clear from figure \ref{fig:phaseplane_k=-1} that the lubrication analysis corresponds to linearising about the centre $(0,0)$ in the $k=-1$ phase plane. The situation is precisely analogous to linearising a nonlinear oscillator about a non-hyperbolic equilibrium: the linearised system (simple harmonic motion) cannot predict the relationship between amplitude and period and cannot capture the other features of the phase plane, such as the separatrix. Another classical example occurs when linearising the Euler elastica around a bifurcation point \cite{Brown2014}.
\end{remark}

% \clearpage
\section{Imperfect wetting\label{sec:imperfect}}

In this section we present illustrative solutions and bifurcation diagrams for imperfect wetting, when the contact angle $\beta > 0$. It is harder to obtain analytical results for this problem, because some of the relevant time maps are no longer monotonic in $E_-$ (cf. Proposition \ref{prop:ahighEmonotonic}). Nevertheless, all the behaviour of these imperfectly wetting solutions can be understood in terms of the perfectly wetting solutions.

Imperfectly wetting rivulets satisfy the problem \eqref{eq:zetaode}, \eqref{eq:zetabc} with $\beta\neq 0$. We will first consider cases $0<\beta<\pi/2$ (in fluid-dynamical terms, a hydrophilic substrate), and then $\beta > \pi/2$ (a hydrophobic substrate).

\subsection{Hydrophilic substrate ($0<\beta<\pi/2$)\label{sec:hydrophilic}}

\subsubsection{Construction of solutions\label{subsec:imperfect-construction}}

When $0<\beta<\pi/2$, each solution corresponds to a trajectory that starts and ends in the $k=-1$ phase plane.

\begin{figure}[tbp]
\begin{center}
\setlength{\unitlength}{1.0cm}

 \begin{picture}(12,10.5)

 \put(1.0,0.0){\epsfig{figure={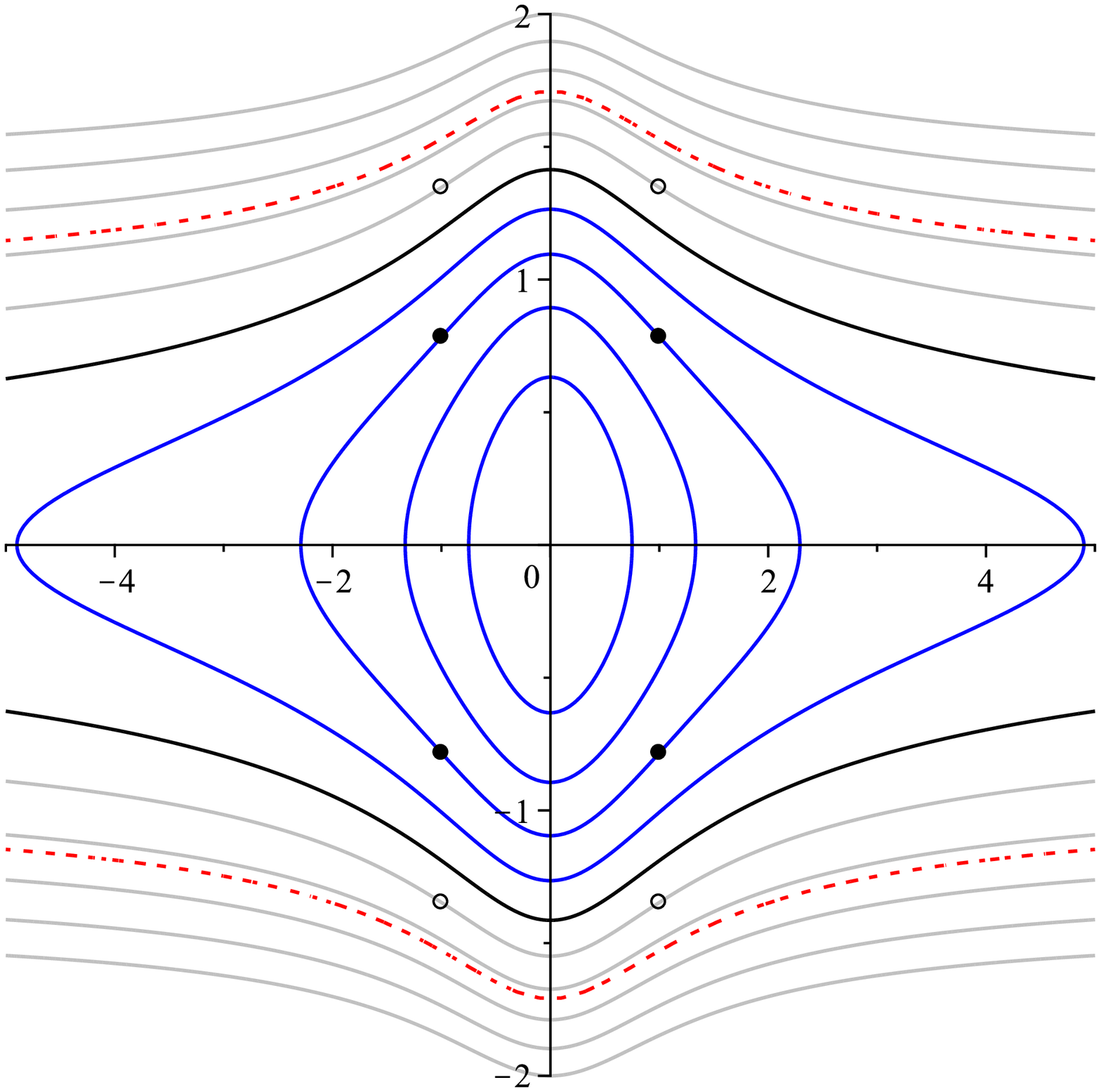},width=10\unitlength}}

 \put(11.0,4.6){\footnotesize{$p$}}
 \put(6.2,9.8){\footnotesize{$q$}}

 \put(10.9,6.35){\footnotesize{$E_-=1$}}
 \put(10.9,3.4){\footnotesize{$E_-=1$}}

 \put(9.0,3.5){\vector(3,1){0.5}}
 \put(3.0,3.5){\vector(3,-1){0.5}}
 \put(9.0,6.5){\vector(-3,1){0.5}}
 \put(3.0,6.5){\vector(-3,-1){0.5}}

 \put(7.15,6.8){\footnotesize{A}}
 \put(7.15,3.1){\footnotesize{B}}
 \put(7.1,8.2){\footnotesize{C}}
 \put(7.1,1.5){\footnotesize{D}}

 \put(4.7,6.8){\footnotesize{A$'$}}
 \put(4.7,3.1){\footnotesize{B$'$}}
 \put(4.65,8.2){\footnotesize{C$'$}}
 \put(4.65,1.5){\footnotesize{D$'$}}

 \end{picture}
 \caption{Phase portrait of \eqref{eq:pqodes} with $k=-1$ (cf. figure \ref{fig:phaseplane_k=-1}); orbits correspond to $H(p,q,;-1)=E_-$. Closed orbits (blue) represent energies $0<E_-<1$; unbounded orbits (gray) represent energies $E_->1$; the black orbit represents $E_-=1$; the dashed red orbit represents $E_-=E^*\approx 1.462$. Arrows indicate the direction of travel in each quadrant as $y$ increases. Points A--D$'$ are discussed in the text.}
 \label{fig:phaseplane_k=-1_imperfect}

\end{center}
\end{figure}

Each solution is characterised by an energy $E_-$ and by the starting value of $p=p_{\beta}=\tan(\beta)$. It is readily seen from the phase plane (figure \ref{fig:phaseplane_k=-1_imperfect}) that for any value of $E_-$ there are two possible solutions; it is convenient to consider $E_-\leq1$ and $E_->1$ separately.

For $E_-\leq 1$ solutions either run from A: $(p_{\beta},q_{\beta})$ to A$'$: $(-p_{\beta},q_{\beta})$, or from B: $(p_{\beta},-q_{\beta})$ to B$'$: $(-p_{\beta},-q_{\beta})$, where
\begin{equation}\label{eq:qbeta}
 q_{\beta} = \left(\dfrac{2(1+(E_--1)\sqrt{1+p_{\beta}^2})}{\sqrt{1+p_{\beta}^2}}\right)^{1/2}.
\end{equation}
Examples of these points are plotted on figure \ref{fig:phaseplane_k=-1_imperfect}, taking $E_-=0.6$ for AA$'$ and BB$'$, and $E_-=1.2$ for CC$'$ and DD$'$.

\begin{remark}
Solutions cannot run, for example, from A to B$'$, because such solutions cannot satisfy the boundary conditions $\zeta(\pm a) = 0$.
\end{remark}

Note that for a given value of $p_{\beta}>0$ there is a minimum attainable value of the energy, given by
\begin{equation}\label{eq:Emin}
 E_- = E_{\min}(p_{\beta}) = 1-\dfrac{1}{\sqrt{1+p_{\beta}^2}}.
\end{equation}
When $E_-=E_{\min}(p_{\beta})$, the points A and B (and likewise A$'$ and B$'$) are identical.

Solutions of the form AA$'$, BB$'$ and CC$'$ all correspond to trajectories that lie entirely in the $k=-1$ phase plane. To construct solutions of the form DD$'$ we must connect trajectories in the $k=-1$ plane with trajectories in the $k=+1$ plane as in \S\ref{subsec:largeenergy}. (We omit the details here for brevity, as they contain no new ideas.)

Figure \ref{fig:pb=1_profiles} illustrates the profiles for $E_-=0.6$ (AA$'$ and BB$'$) and $E_-=1.2$ (CC$'$ and DD$'$). Note that the small-area solution in each case can be obtained by truncating the high-area solution at an appropriate horizontal level, and indeed each solution can be obtained by truncating the perfectly wetting solution for the same energy. (This construction is made more explicit in \cite{PerazzoGratton2004}.)

\begin{figure}[tbp]
\begin{center}
\setlength{\unitlength}{1.0cm}

 \begin{picture}(14,7.5)

 \put(0.0,0.0){\epsfig{figure={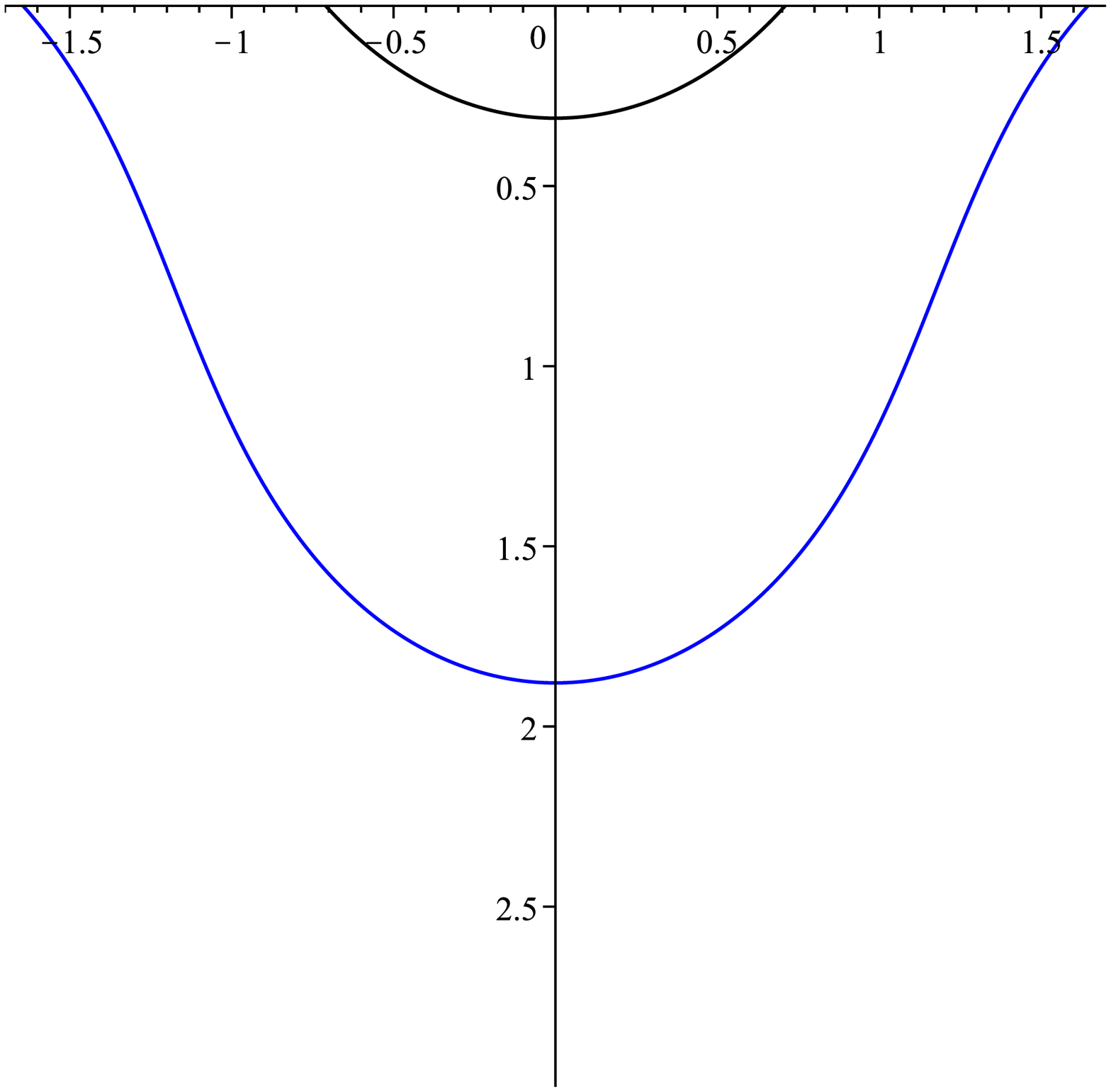},width=7\unitlength}}
 \put(7.0,0.0){\epsfig{figure={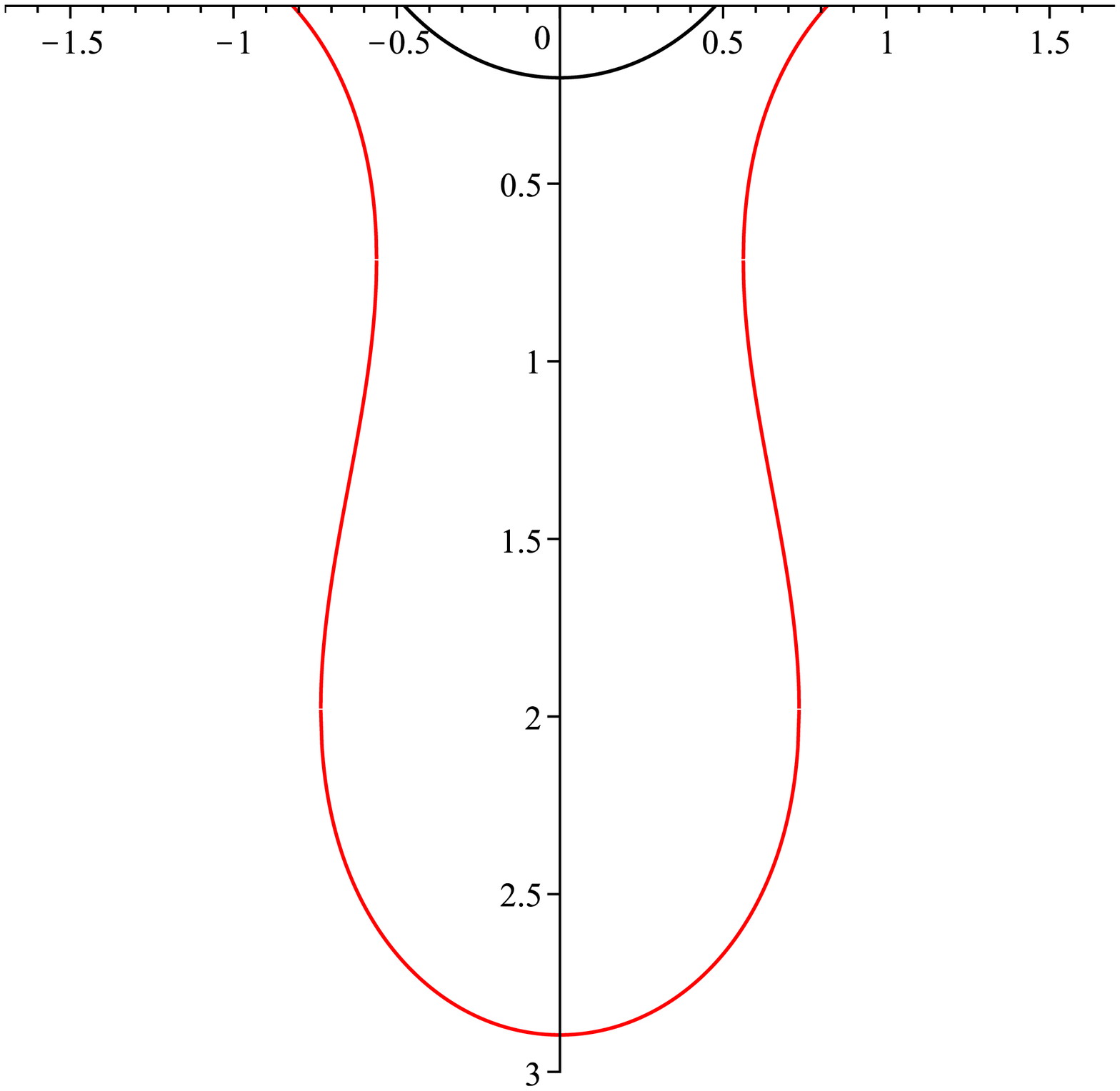},width=7\unitlength}}

 \put(0.0,7.0){\footnotesize{(a)}}
 \put(7.0,7.0){\footnotesize{(b)}}

 \put(4.45,6.05){\footnotesize{\textcolor{black}{AA$'$}}}
 \put(5.0,4.3){\footnotesize{\textcolor{blue}{BB$'$}}}
 \put(10.9,6.1){\footnotesize{\textcolor{black}{CC$'$}}}
 \put(11.9,4.3){\footnotesize{\textcolor{red}{DD$'$}}}
 
 \put(3.75,3.3){\scriptsize{$z$}}
 \put(5.5,7.0){\scriptsize{$y$}}
 \put(10.75,3.3){\scriptsize{$z$}}
 \put(12.5,7.0){\scriptsize{$y$}}
 
 \end{picture}
 \caption{The rivulet profiles $z=\zeta(y)$ for (a) $E_-=0.6$ and (b) $E_-=1.2$. In (a), the profile in black corresponds to the AA$'$ trajectory and the profile in blue corresponds to the BB$'$ trajectory. In (b), the profile in black corresponds to the CC$'$ trajectory and the profile in red corresponds to the DD$'$ trajectory. (See figure \ref{fig:phaseplane_k=-1_imperfect} in each case.)}
 \label{fig:pb=1_profiles}

\end{center}
\end{figure}

Because the shape of the solutions depends only on $E_-$ and not on $p_{\beta}$, pinch-off occurs at the same energy $E_-=E^*$ as for the perfectly wetting solutions. However, pinch-off does not affect CC$'$ solutions as these do not have an overhang.

\subsubsection{Bifurcation structure\label{subsec:imperfect-bifurcation}}

Figures \ref{fig:areavswidth_pb=0_1} and \ref{fig:areavswidth_pb=1} show the bifurcation diagram in the $(a,A)$-plane for $p_{\beta}=0.1$ and $p_{\beta}=1$ respectively.

\begin{figure}[tbp]
\begin{center}
\setlength{\unitlength}{1.0cm}

 \begin{picture}(12,10.5)

 \put(1.0,0.0){\epsfig{figure={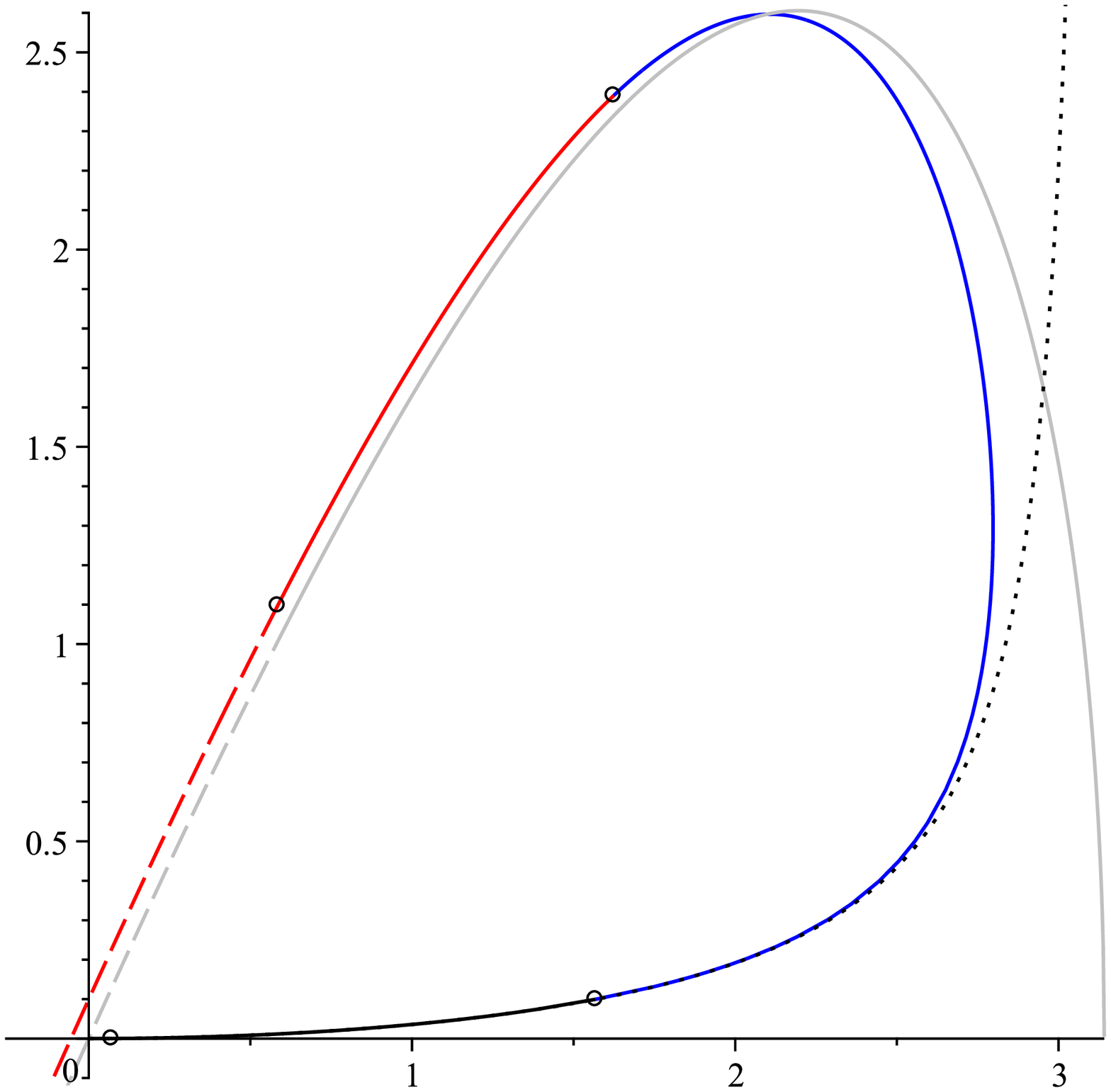},width=10\unitlength}}

 \put(6.1,0.2){\footnotesize{$a$}}
 \put(1.0,5.3){\footnotesize{$A$}}

 \put(4.9,1.45){\footnotesize{$E_-=E_{\min}$}}
 \put(6.75,8.85){\footnotesize{$E_-=1$}}
 \put(3.8,4.4){\footnotesize{$E_-=E^*$}}
 \put(2.2,1.05){\footnotesize{$E_-=1$}}

 \put(4.3,1.1){\scriptsize{AA$'$}}
 \put(9.1,6.1){\color{blue}{\scriptsize{BB$'$}}}
 \put(4.2,6.8){\color{red}{\scriptsize{DD$'$}}}
 
 \end{picture}
 \caption{The half-area $A$ of a rivulet plotted against the half-width $a$, for $p_{\beta}=0.1$. The black portion of the curve represents solutions of the form AA$'$ and CC$'$; the blue portion represents solutions of the form BB$'$; the red portion represents sections of the form DD$'$. The gray curve is for $p_{\beta}=0$ (cf. figure \ref{fig:areavswidth}). Dashed portions of each curve represent solutions beyond pinch-off. The dotted line is the asymptotic result \eqref{eq:Aaasympt} from lubrication theory.}
 \label{fig:areavswidth_pb=0_1}

\end{center}
\end{figure}

For $p_{\beta}>0$, $q=0$ is no longer a solution to the boundary-value problem and the pitchfork bifurcation (Proposition \ref{prop:pitchfork}) is replaced by a saddle-node bifurcation (figure \ref{fig:areavswidth_pb=0_1}). There are two solution branches, so for a given half-width $a$ there are two solutions: a small-$A$ solution and a large-$A$ solution.

\begin{remark}
Lubrication theory is premised on $|p|\ll 1$ (see \S\ref{sec:lubrication}), i.e. lubrication solutions are approximations to AA$'$ and CC$'$ solutions in the vicinity of the positive $q$-axis. Hence lubrication theory describes only the small-$A$ solution branch. 

From the solution \eqref{eq:zetasol_lub_pb} in the lubrication limit we obtain the asymptotic result
\begin{equation}\label{eq:Aaasympt}
 A(a) = p_{\beta}\left(1 - a\cot(a)\right),
\end{equation}
which is plotted as a dotted line in figure \ref{fig:areavswidth_pb=0_1}. The asymptotic approximation \eqref{eq:Aaasympt} is usually described (cf. \cite{WilsonDuffy2005}) as valid for $p_{\beta}\ll 1$; however, as figure \ref{fig:areavswidth_pb=0_1} illustrates, it is not uniformly valid in $a$ because of the change in the bifurcation structure when $p_{\beta}> 0$.
\end{remark}

\begin{figure}[tbp]
\begin{center}
\setlength{\unitlength}{1.0cm}

 \begin{picture}(12,10.5)

 \put(1.0,0.0){\epsfig{figure={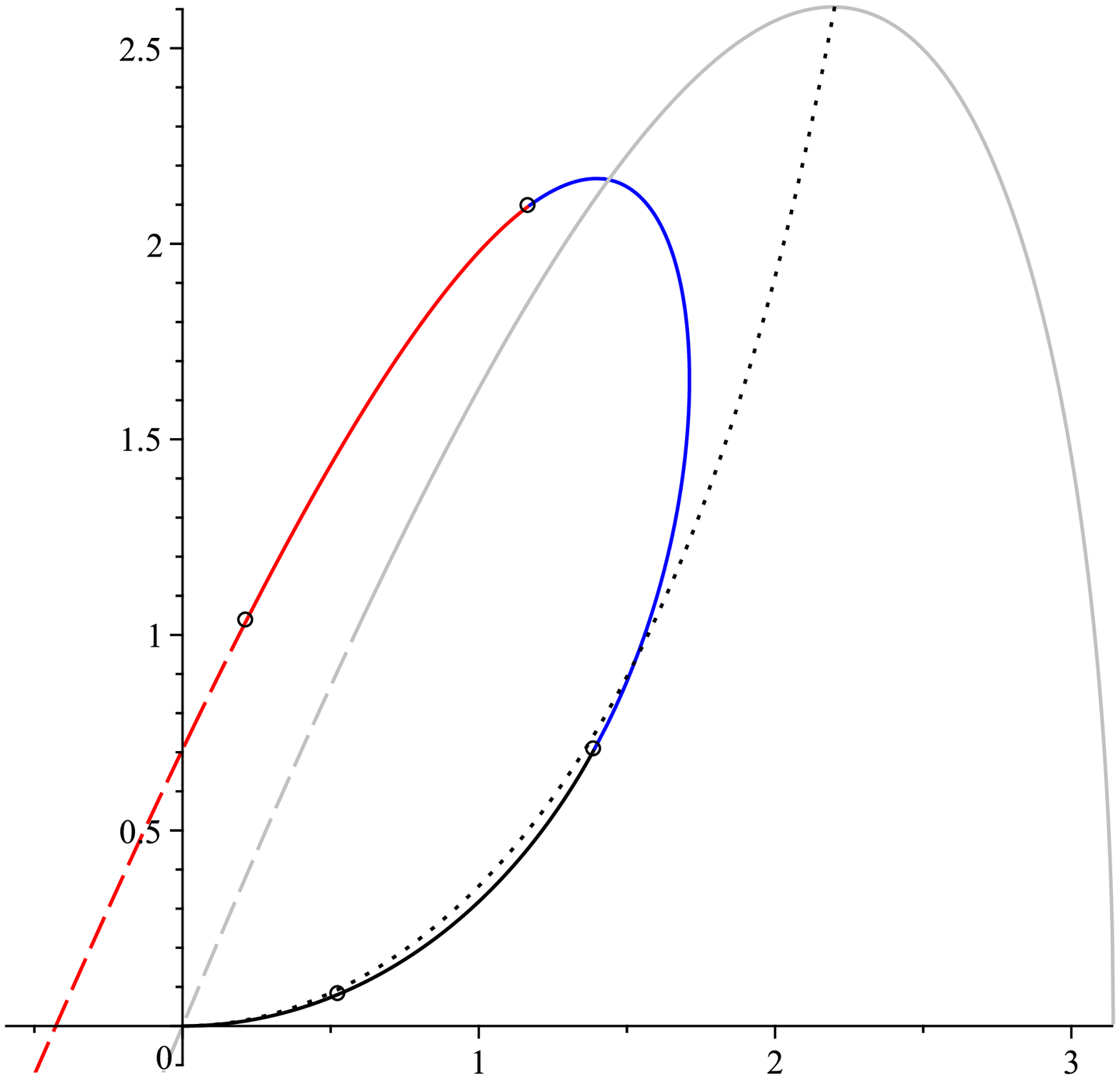},width=10\unitlength}}

 \put(6.1,0.2){\footnotesize{$a$}}
 \put(1.0,5.3){\footnotesize{$A$}}

 \put(6.4,3.1){\footnotesize{$E_-=E_{\min}$}}
 \put(4.8,8.25){\footnotesize{$E_-=1$}}
 \put(3.4,4.3){\footnotesize{$E_-=E^*$}}
 \put(3.0,1.3){\footnotesize{$E_-=1$}}

 \end{picture}
 \caption{The half-area $A$ of a rivulet plotted against the half-width $a$, for $p_{\beta}=1$. The black portion of the curve represents solutions of the form AA$'$ and CC$'$; the blue portion represents solutions of the form BB$'$; the red portion represents sections of the form DD$'$. The gray curve is for $p_{\beta}=0$ (cf. figure \ref{fig:areavswidth}). Dashed portions of each curve represent solutions beyond pinch-off. The dotted line is the asymptotic result \eqref{eq:Aaasympt} from lubrication theory.}
 \label{fig:areavswidth_pb=1}

\end{center}
\end{figure}

As $p_{\beta}$ increases, the saddle-node bifurcation occurs at smaller values of $a$ and, as might be expected, the lubrication approximation becomes less accurate (figure \ref{fig:areavswidth_pb=1}).

Figure \ref{fig:bifout} illustrates further how the bifurcation diagram in the $(a,A)$-plane changes as $p_{\beta}$ increases.

\begin{figure}[tbp]
\begin{center}
\setlength{\unitlength}{1.0cm}

 \begin{picture}(12,10.5)

 \put(2.0,0.0){\epsfig{figure={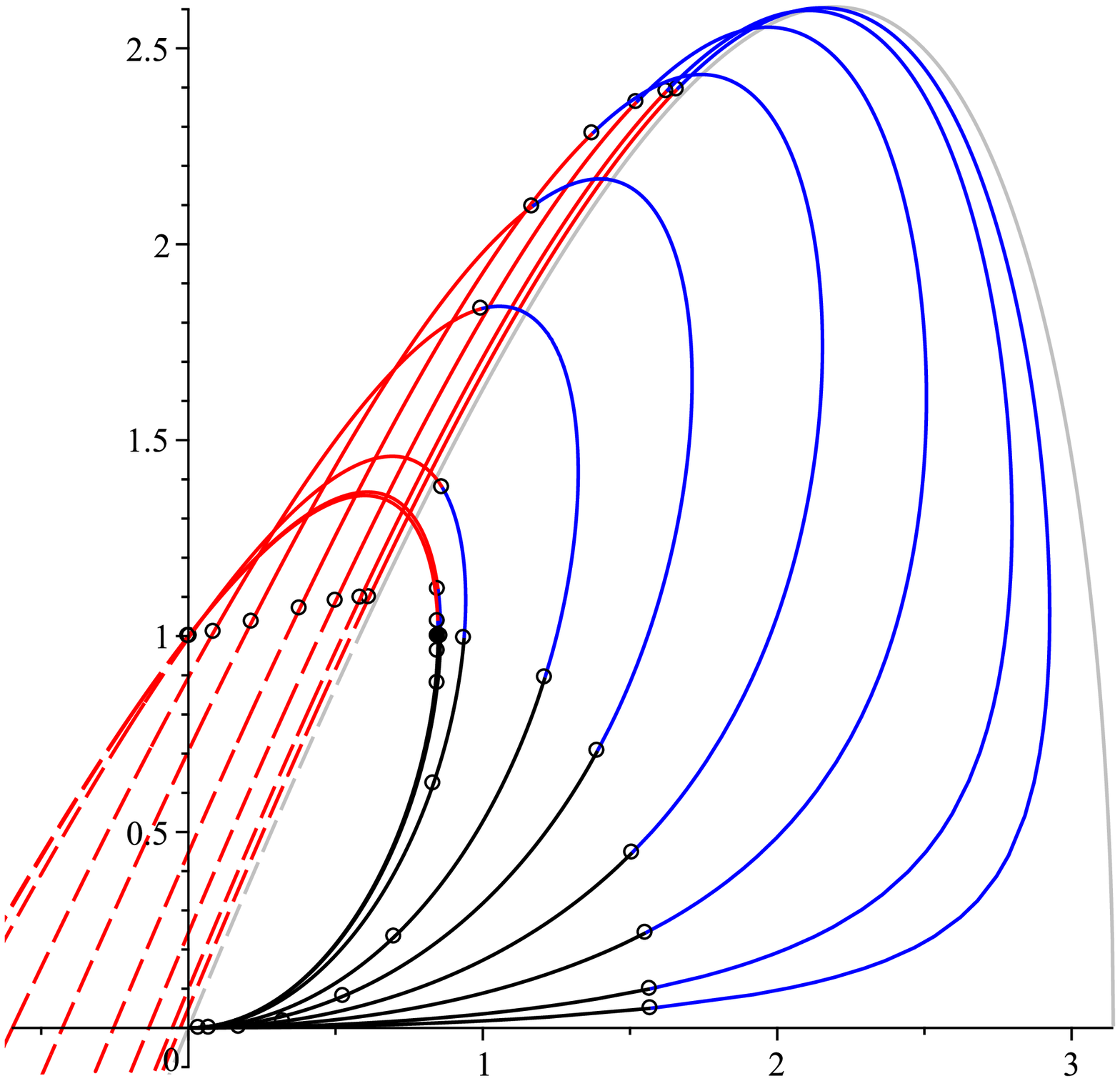},width=10\unitlength}}
 \put(0.0,7.0){\epsfig{figure={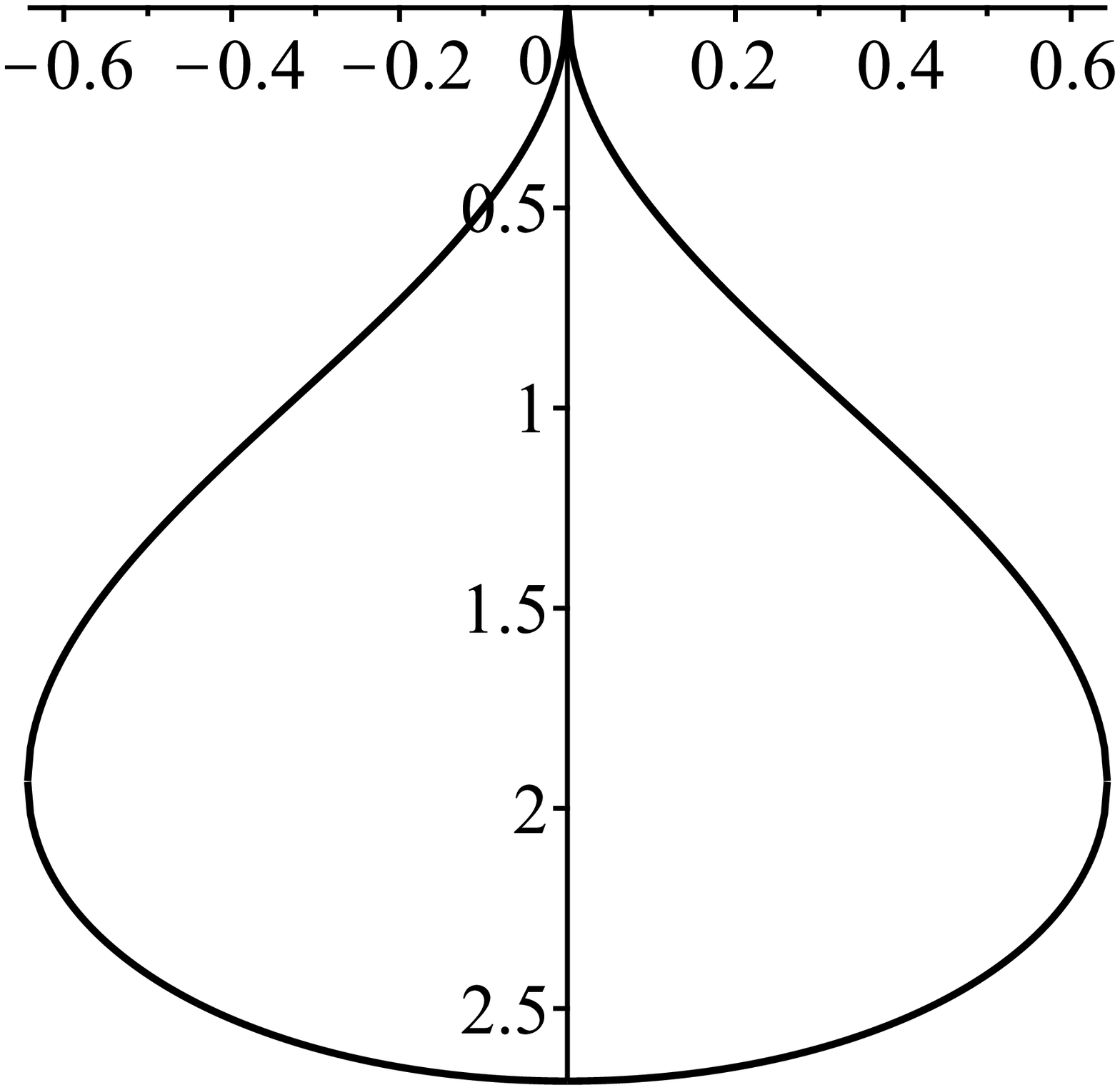},width=3\unitlength}}
 \put(0.0,3.0){\epsfig{figure={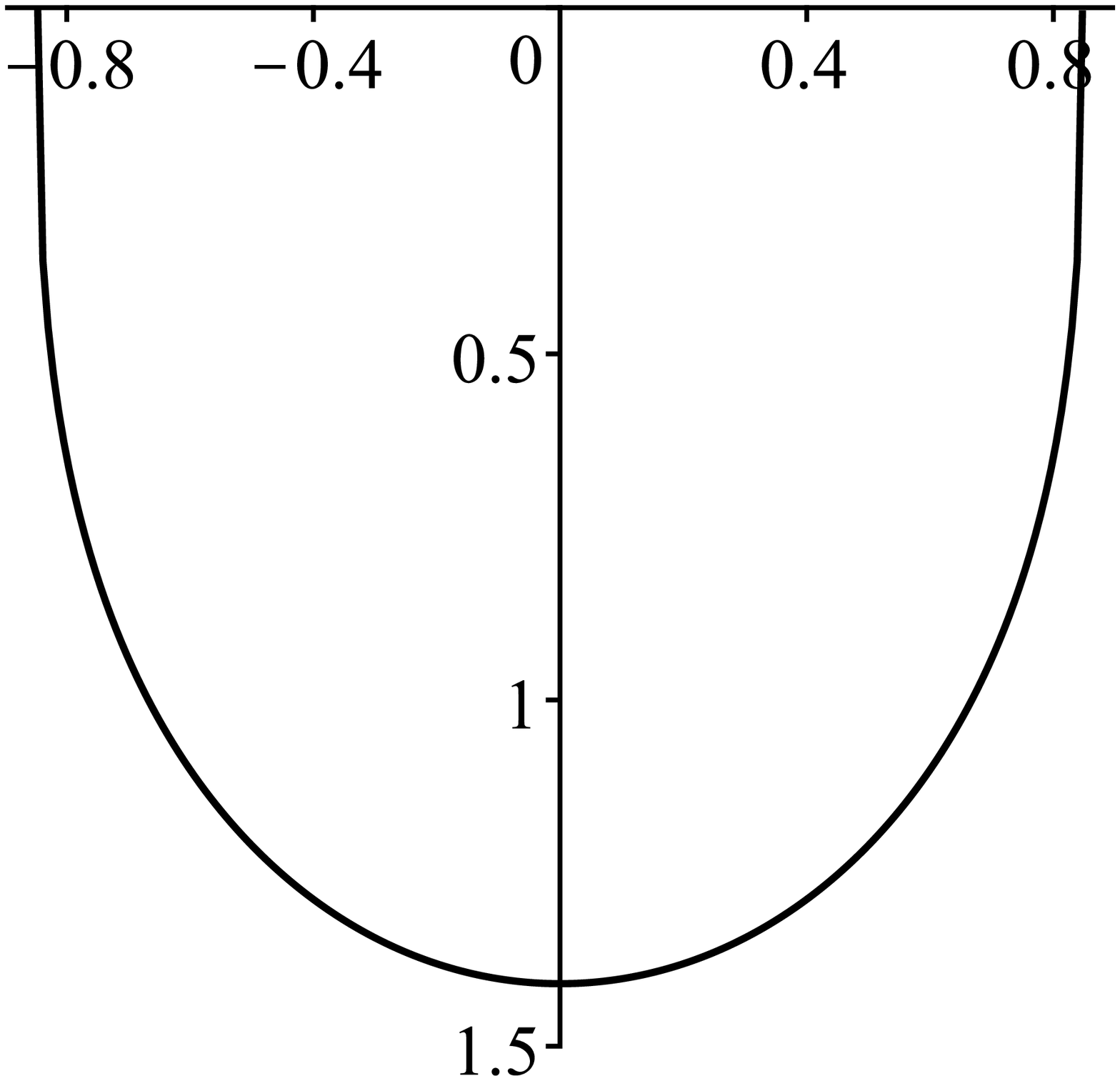},width=3\unitlength}}

 \put(7.1,0.2){\footnotesize{$a$}}
 \put(3.0,6.5){\footnotesize{$A$}}

 \end{picture}
 \caption{The half-area $A$ of a rivulet plotted against the half-width $a$, for $p_{\beta}=0.05$, $0.1$, $0.25$, $0.5$, $1$, $2$, $10$, $100$ and $1000$. The black portion of each curve represents solutions of the form AA$'$ and CC$'$; the blue portion represents solutions of the form BB$'$; the red portion represents sections of the form DD$'$. The gray curve is for $p_{\beta}=0$ (cf. figure \ref{fig:areavswidth}). Dashed portions of each curve represent solutions beyond pinch-off. Open circles mark the points $E_-=1$, $E_-=E_{\min}(p_{\beta})$ and $E_-=E^*$ on each curve; the solid circle marks $(a_{\infty},A_{\infty})$. The upper inset shows the profile of the rivulet in the limit $E_-=E^*$, $p_{\beta}\to\infty$; the lower inset shows the profile of the rivulet in the limit $E_-=1$, $p_{\beta}\to\infty$.}
 \label{fig:bifout}

\end{center}
\end{figure}

The limiting `pinch-off' solution ($E_-=E^*$) in the limit as $p_{\beta}\to\infty$ is shown in the upper inset in figure \ref{fig:bifout}. Pinch-off occurs for $\zeta_{\mathrm{p}}=0$, giving a contact angle of $\pi/2$; this solution therefore has $a=0$ by construction, and by Proposition \ref{prop:pinchoffarea} its half-area is $A = 1$.

In the limit as $p_{\beta}\to\infty$, $E_{\min}(p_{\beta})\to 1$, and so the section of the curve corresponding to BB$'$ solutions disappears. The limiting solution without overhang (the lower inset in figure \ref{fig:bifout}) corresponds to $E_-=1$ and $p_{\beta}\to \infty$; its half-width is thus given by
\begin{equation}\label{eq:alimitAA}
 a_{\infty} = \int_0^{\sqrt{2}}\dfrac{q^2}{[(2-q^2)(2+q^2)]^{1/2}}\d q = 2E\left(\frac{1}{\sqrt{2}}\right)-K\left(\frac{1}{\sqrt{2}}\right) \approx 0.847,
\end{equation}
where $E$ and $K$ are elliptic functions as before. The half-area of this limiting solution is given, using \eqref{eq:dqdy}, by
\begin{equation}\label{eq:AlimitAA}
 A_{\infty} = \int_0^{\sqrt{2}}q\dfrac{\d y}{\d q}\d q = \int_0^{\sqrt{2}}\dfrac{q}{\left[4q^{-4}-1\right]^{1/2}}\d q = 1.
\end{equation}

\subsection{Hydrophobic substrate ($\beta>\pi/2$)\label{sec:hydrophobic}}

When $\beta>\pi/2$, each solution corresponds to a trajectory that starts and ends in the $k=+1$ plane and is connected via the $k=-1$ plane. (In fact, every such solution corresponds to part of a DD$'$ solution as described above.) We omit the details of the construction as they can readily be deduced from \S\ref{subsec:largeenergy}, and give only a brief account of the results.

For given admissible values of $E_-$ and $p_{\beta}<0$, we define
\begin{equation}
 q_{\beta} = \sqrt{2\left(E_--1-\frac{1}{(1+p_{\beta}^2)^{1/2}}\right)}.
\end{equation}
Solutions either run from E to E$'$ or from F to F$'$, where E: $(p_{\beta},q_{\beta})$, E$'$: $(-p_{\beta},q_{\beta})$,
F: $(p_{\beta},-q_{\beta})$, F$'$: $(-p_{\beta},-q_{\beta})$, all in the $k=+1$ phase plane.

For a given value of $p_{\beta}$, the energy $E_-$ of an EE$'$ or FF$'$ solution must satisfy
\begin{equation}
 E_- > E_{\beta} = 1 + \dfrac{1}{(1+p_{\beta}^2)^{1/2}}.
\end{equation}

\begin{figure}[tbp]
\begin{center}
\setlength{\unitlength}{1.0cm}

 \begin{picture}(12,10.5)

 \put(1.0,0.0){\epsfig{figure={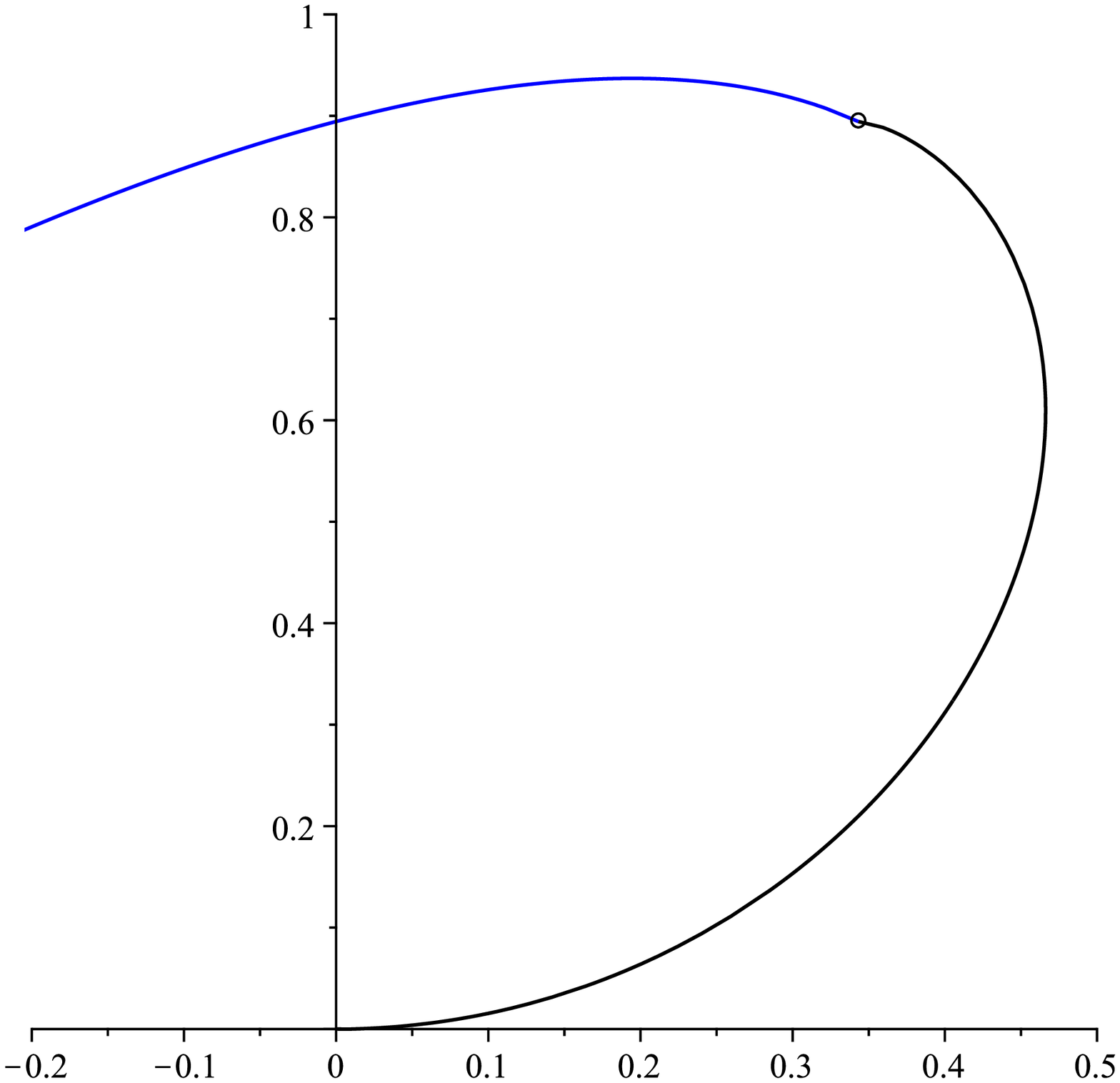},width=10\unitlength}}

 \put(6.65,0.2){\footnotesize{$a$}}
 \put(3.8,5.05){\footnotesize{$A$}}

 \put(8.65,8.9){\footnotesize{$E_-=E_{\beta}$}}
 \put(4.2,1.1){\footnotesize{$E_-\to\infty$}}

 \put(10.5,3.5){\epsfig{figure={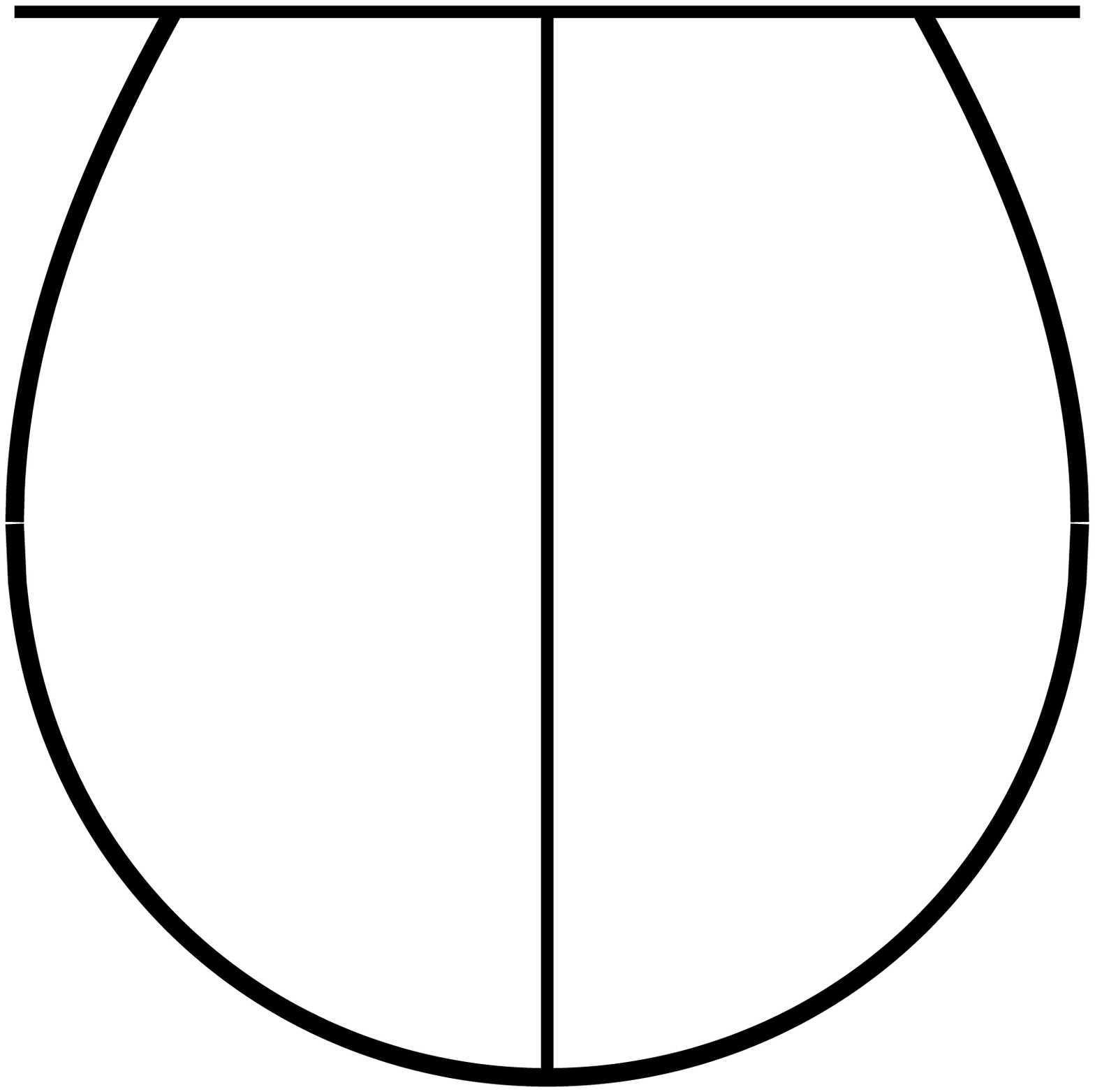},width=2\unitlength}}
 \put(5.3,6.8){\epsfig{figure={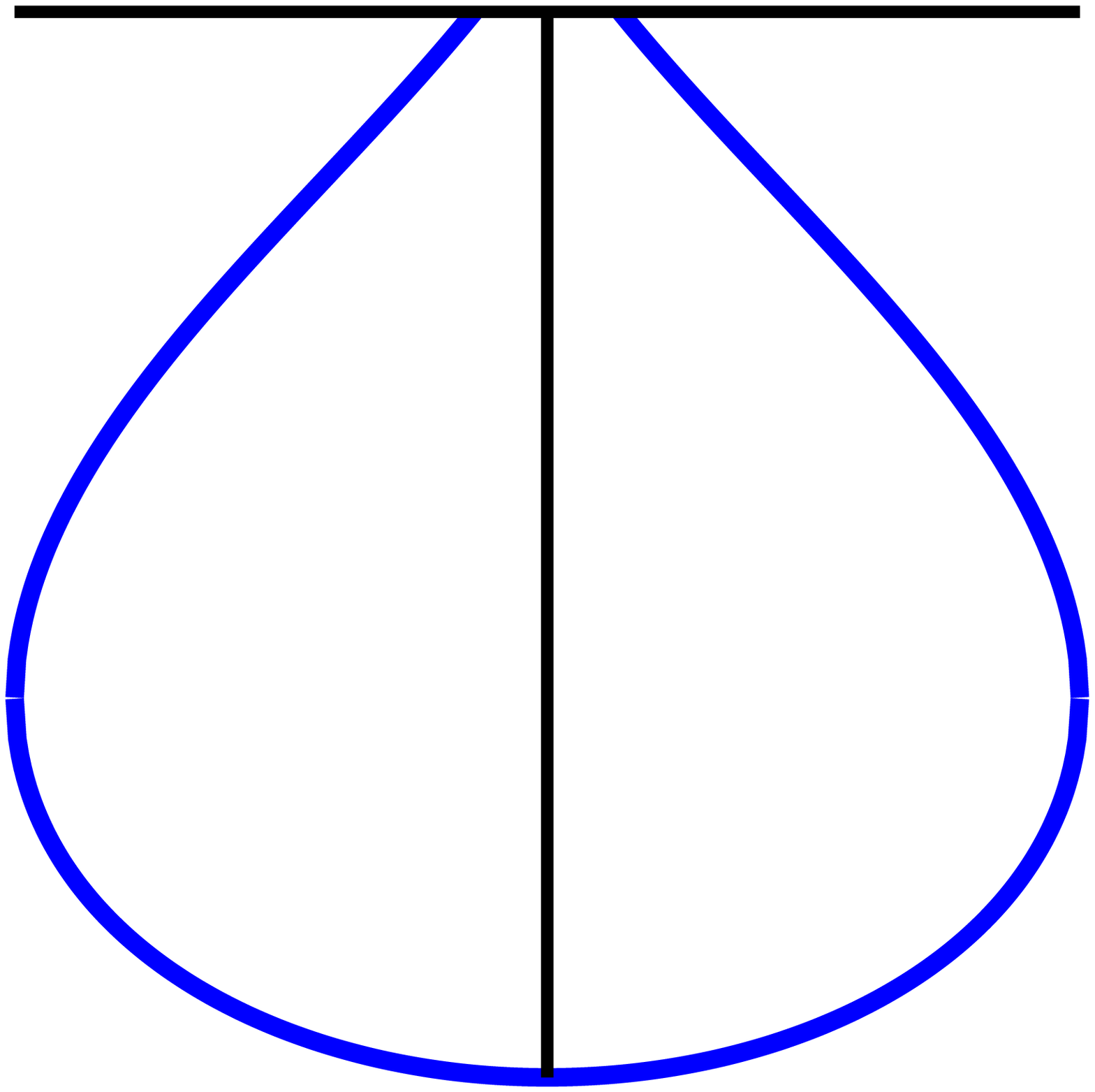},width=2\unitlength}}
 \put(1.3,5.7){\epsfig{figure={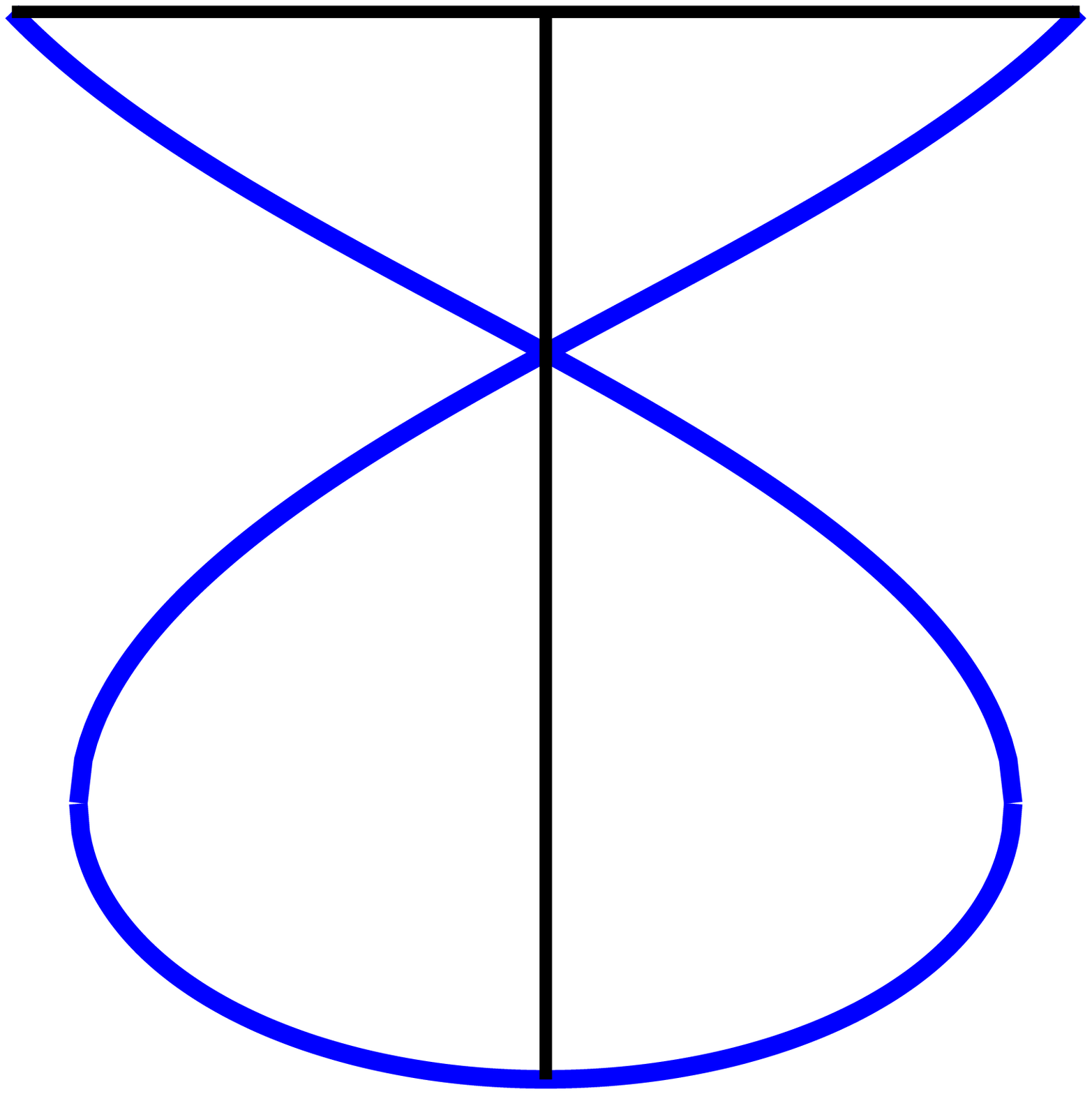},width=2\unitlength}}

 \put(8.5,2.1){\scriptsize{EE$'$}}
 \put(3.0,8.7){\color{blue}{\scriptsize{FF$'$}}}

 \end{picture}
 \caption{The half-area $A$ of a rivulet plotted against the half-width $a$, for $p_{\beta}=-2$. The black portion of the curve represents solutions of the form EE$'$; the blue portion represents solutions of the form FF$'$. The insets (not to scale) indicate the form of the rivulet in each case.}
 \label{fig:areavswidth_pb=-2}

\end{center}
\end{figure}

Figure \ref{fig:areavswidth_pb=-2} shows the bifurcation diagram in the $(a,A)$-plane, and typical rivulet profiles, for a hydrophobic rivulet with $p_{\beta}=-2$. The EE$'$ branch (black) consists of physically valid solutions for all $E_-\geq E_{\beta}$; as $E_-\to\infty$, $(a,A)\to (0,0)$. The FF$'$ branch (blue) consists of physically valid solutions for $E_{\beta} \leq E_- \lesssim 1.525$, at which point the half-width $a=0$; for values of $E_- \gtrsim 1.525$, the solutions self-intersect (cf. figure 3 of \cite{Sokurov2020}). Note that the value of $E_-$ beyond which the solutions self-intersect depends on $p_{\beta}$, but must always be greater than $E^*$.

\begin{figure}[tbp]
\begin{center}
\setlength{\unitlength}{1.0cm}

 \begin{picture}(12,10.5)

 \put(2.0,0.0){\epsfig{figure={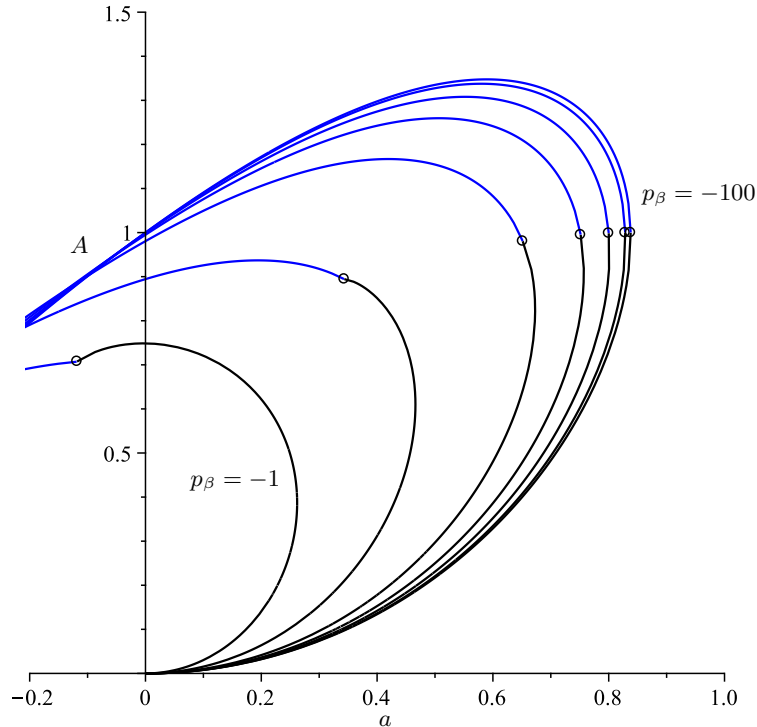},width=10\unitlength}}

 \put(7.1,0.2){\footnotesize{$a$}}
 \put(3.0,6.5){\footnotesize{$A$}}

 \put(4.6,3.4){\footnotesize{$p_{\beta}=-1$}}
 \put(10.6,7.2){\footnotesize{$p_{\beta}=-100$}}

 \end{picture}
 \caption{The half-area $A$ of a rivulet plotted against the half-width $a$, for $p_{\beta}=-1$, $-2$, $-5$, $-10$, $-20$, $-50$, and $-100$. The black portion of each curve represents solutions of the form EE$'$; the blue portion represents solutions of the form FF$'$. Open circles mark the point $E_-=E_{\beta}$ on each curve.}
 \label{fig:bifout_hydrophobic}

\end{center}
\end{figure}

Figure \ref{fig:bifout_hydrophobic} illustrates how the bifurcation diagram in the $(a,A)$-plane changes as $p_{\beta}$ changes. As $p_{\beta}\to-\infty$, the contact angle $\beta \to \pi/2$ and so the curves approach the same limit as those for the hydrophilic problem in the limit $p_{\beta}\to\infty$ (see figure \ref{fig:bifout}). As $|p_{\beta}|$ becomes smaller, the range of physically valid solutions decreases; for sufficiently small $|p_{\beta}|$, only solutions of the form EE$'$ are valid.

% \clearpage
\section{Discussion\label{sec:discussion}}

We have presented an approach, based on phase-plane methods, to the problem of describing the free surface of a pendent rivulet. Although this problem has previously been tackled by other methods (e.g. \cite{Tanasijczuketal2010}), the present approach allows us to make greater analytical progress, and in particular to carry out a more thorough bifurcation analysis.

In particular, our approach elucidates why lubrication theory fails to capture the behaviour of perfectly wetting pendent rivulets: the lubrication approach corresponds to linearising around a centre in the phase plane. This is analogous to linearising a nonlinear oscillator about a non-hyperbolic equilibrium; another classical example occurs when linearising the Euler elastica around a bifurcation point \cite{Brown2014}. For imperfectly wetting pendent rivulets, lubrication theory provides a good approximation to one of the two solution branches when the rivulet is narrow, but the asymptotic approximation even to this branch is not uniform in the half-width $a$.

A feature of our approach which appears to be novel is that we `splice together' solutions from orbits in two related phase planes. We suggest that this could be a useful technique in other problems that can be formulated such that the governing equation changes discontinuously but the solution itself does not.

Finally, we note that our approach can readily be applied to the corresponding problem for sessile rivulets. For sessile rivulets on hydrophilic surfaces, $0<\beta<\pi/2$, solutions correspond to orbits in the $k=+1$ phase plane. Because the origin in this phase plane is a saddle point, linearising about it correctly captures the behaviour \cite{PerazzoGratton2004}: in particular, lubrication theory correctly predicts \cite{WilsonDuffy2005} that there are no solutions for perfectly wetting sessile rivulets. For sessile rivulets on hydrophobic surfaces, $\beta>\pi/2$, solutions correspond to orbits that start and end in the $k=-1$ phase plane and are connected via the $k=+1$ phase plane; they may be constructed analogously to those described in \S\ref{sec:hydrophobic}.

Finally, we note that the approach here has the potential to be applied to the related problem of horizontal or vertical liquid bridges (e.g. \cite{Haynes2016}) without assuming the smallness of any parameters, and we recommend this as a direction for future work.

\null

\textbf{Acknowledgements.} We are grateful to Brian R. Duffy and Stephen K. Wilson for discussions of the rivulet problem over the course of several years.

%\bibliography{Wetting}

\end{document}